\documentclass[12pt]{article}

\usepackage{graphicx}
\usepackage{amsmath}
\usepackage{amssymb}
\usepackage{amsthm}
\usepackage{xcolor}
\usepackage{hyperref}
\usepackage{algpseudocode}
\usepackage{algorithm}
\usepackage{authblk}

\textwidth=18cm
\oddsidemargin=-10mm
\topmargin=-2cm
\setlength{\textheight}{43\baselineskip}
\setlength{\textheight}{\baselinestretch\textheight}
\addtolength{\textheight}{\topskip}

\newcommand{\be}{\begin{equation}}
\newcommand{\ee}{\end{equation}}
\newcommand{\ba}{\begin{array}}
\newcommand{\ea}{\end{array}}
\newcommand{\bea}{\begin{eqnarray}}
\newcommand{\eea}{\end{eqnarray}}

\newcommand{\ra}{\rangle}
\newcommand{\la}{\langle}

\newcommand{\ket}[1]{{\left\vert{#1}\right\rangle}}

\newcommand{\calL}{{\cal L }}

\newcommand{\calC}{{\cal C }}
\newcommand{\calS}{{\cal S }}

\newcommand{\calP}{{\cal P }}

\newcommand{\ZZ}{\mathbb{Z}}

\newtheorem{dfn}{Definition}

\newtheorem{lemma}{Lemma}
\newtheorem{fact}{Fact}

\newtheorem{theorem}{Theorem}

\newcommand{\eq}[1]{Eq.~(\ref{eq:#1})}
\renewcommand{\sec}[1]{\hyperref[sec:#1]{Section~\ref*{sec:#1}}}
\newcommand{\ssec}[1]{\hyperref[ssec:#1]{Subsection~\ref*{ssec:#1}}}
\newcommand{\fig}[1]{\hyperref[fig:#1]{Fig.~\ref*{fig:#1}}}
\newcommand{\tab}[1]{\hyperref[tab:#1]{Table~\ref*{tab:#1}}}
\newcommand{\lem}[1]{\hyperref[lem:#1]{Lemma~\ref*{lem:#1}}}
\newcommand{\prop}[1]{\hyperref[prop:#1]{Proposition~\ref*{prop:#1}}}
\newcommand{\thm}[1]{\hyperref[thm:#1]{Theorem~\ref*{thm:#1}}}

\newcommand{\Gate}[1]{\textsc{#1}}
\newcommand{\hgate}{\Gate{H}}

\newcommand{\tgate}{\Gate{T}}
\newcommand{\pgate}{\Gate{P}}

\newcommand{\notgate}{\Gate{NOT}}
\newcommand{\cnotgate}{\Gate{CNOT}}
\newcommand{\toffoligate}{\Gate{Toffoli}}
\newcommand{\fredkingate}{\Gate{Fredkin}}
\newcommand{\swapgate}{\Gate{SWAP}}

\begin{document}
\title{Efficient ancilla-free reversible and quantum circuits \\ 
for the Hidden Weighted Bit function}

\author{Sergey Bravyi, Theodore J. Yoder, and Dmitri Maslov}
\affil{IBM Quantum, IBM T. J. Watson Research Center \\ Yorktown Heights, NY 10598, USA}

\maketitle

\abstract{
The Hidden Weighted Bit function plays an important role in the study of classical models of computation.  A common belief is that this function is exponentially hard for the implementation by reversible ancilla-free circuits, even though introducing a small number of ancillae allows a very efficient implementation. In this paper we refute the exponential hardness conjecture by developing a polynomial-size reversible ancilla-free circuit computing the Hidden Weighted Bit function.  Our circuit has size $O(n^{6.42})$, where $n$ is the number of input bits. We also show that the Hidden Weighted Bit function can be computed by a quantum ancilla-free circuit of size $O(n^2)$. The technical tools employed come from a combination of Theoretical Computer Science (Barrington’s theorem) and Physics (simulation of fermionic Hamiltonians) techniques.
}

\section{Introduction}
The origins of the  Hidden Weighted Bit function go back to the study of models of classical computation.  This function, denoted $\mathsf{HWB}$,  takes as input an $n$-bit string $x$ and outputs the $k$-th bit of $x$, where $k$ is the Hamming weight of $x$; if the input weight is $0$, the output is $0$.  It is best known for combining the ease of algorithmic description and implementation by classical Boolean circuits with the hardness of representation by Ordered Binary Decision Diagrams (OBDDs) \cite{bryant1986graph}---a popluar tool in VLSI \cite{meinel2012algorithms}.  The difference between logarithmic-depth implementations of $\mathsf{HWB}$ by circuits (recall that $\mathsf{HWB} \,{\in}\, NC^1$ but $\mathsf{HWB} \,{\not\in}\, AC^0$) and an exponential lower bound for the size of the OBDD \cite{bryant1991complexity} is startling two exponents.  Relaxing the constraints on the type of Binary Decision Diagram considered or restricting the computations by circuits enables a multitude of implementations with polynomial cost \cite{bollig1999complexity}.

The Hidden Weighted Bit function was first introduced in the context of reversible and quantum computations about 15 years ago by I. L. Markov and K. N. Patel (unpublished), and the earliest explicit mention dates to the year 2005 \cite{maslov2005toffoli}.  The original specification is irreversible, and required a slight modification to comply with the restrictions of reversible and quantum computations.  Specifically, the Hidden Weighted Bit function was redefined to become the cyclic shift to the right by the input weight. We denote this reversible specification as $\mathbf{hwb}$.  
Formally, $\mathbf{hwb}(x)$ is defined as the cyclic shift of its input $x$ to the right by $W$ positions, where $W\,{=}\,x_1{+}x_2{+}\ldots {+}x_n$ is the Hamming weight of $x$.  The following shows the truth table of $3$-input $\mathbf{hwb}$:
\begin{center}
\begin{tabular}{r|c|c|c|c|c|c|c|c}
$x$ & 000 & 100 & 010 & 110 & 001 & 101 & 011 & 111 \\
\hline
$\mathbf{hwb}(x)$ & 000 & 010 & 001 & 101 & 100 & 011 & 110 & 111\\
\end{tabular}
\end{center}
Since its introduction, $\mathbf{hwb}$ was used by numerous authors focusing on the synthesis and optimization of reversible and quantum circuits as a test case. 

Despite a stream of improvements in the respective circuit sizes by various research groups \cite{prasad2006data, maslov2007techniques, donald2008reversible, saeedi2010reversible}, the best known ancilla-free reversible circuits exhibit exponential scaling in the number of gates. The synthesis algorithms benefiting from the inclusion of additional gates, such as multiple-control multiple-target Toffoli, Fredkin, and Peres gates \cite{maslov2005toffoli, donald2008reversible, maslov2005synthesis} also failed to find an efficient implementation without ancillae.  In 2013, this culminated with the $\mathbf{hwb}$ receiving the designation of a ``hard'' benchmark function \cite{saeedi2013synthesis}.  A recent asymptotically optimal synthesis algorithm over the library with $\notgate$, $\cnotgate$, and $\toffoligate$ gates \cite{zakablukov2016application}, introduced in the year 2015, was also unable to find an efficient ancilla-free implementation. 
An ancilla-free quantum circuit can be obtained by employing an asymptotically optimal quantum circuit synthesis algorithm such as \cite{shende2006synthesis}, but the quantum gate count appears to remain exponential and larger than what is possible to obtain through the application of the asymptotically optimal reversible logic synthesis algorithm \cite{zakablukov2016application}.

The introduction of even a small number of ancillae changes the picture dramatically. Just $O(\log(n))$ ancillary (qu)bits suffice to develop a reversible circuit with $O(n \log^2(n))$ gates \cite{maslov2005reversible}. Barrington's theorem \cite{barrington1989bounded} allows one to obtain a polynomial-size reversible circuit using three ancillae.  This polynomial-size three-ancilla reversible circuit can be obtained by computing the individual bits of the input weight through Barrington's theorem, and using such bits logarithmically many times to control-$\swapgate$ the respective input (qu)bits into their desired positions.  Finally, the existence of a polynomial-size quantum circuit using a single ancilla follows from \cite{ablayev2005computational}.

State of the art, in both the classical reversible and quantum settings, thus points to an exponential difference in the gate count between circuits with no ancillae and circuits with a constant number of ancillae.  In this paper, we demonstrate efficient implementations of the $\mathbf{hwb}$ function by ancilla-free reversible and quantum circuits, thereby reducing these exponential differences to polynomial.  Specifically, our reversible ancilla-free circuit requires $O(n^{6.42})$ gates and our quantum ancilla-free circuit requires $O(n^2)$ gates. 
These results refute the exponential hardness belief and remove $\mathbf{hwb}$ from the class of hard benchmarks.

We next sketch main ideas behind our ancilla-free circuits. We begin with the reversible circuit.  Our construction works as follows.  First, we show that the $n$-bit $\mathbf{hwb}$ function can be decomposed into a product of $O(n\log(n))$ gates denoted $C5(f(x);B)$, where $f(x)$ is a 
symmetric Boolean function and $B\,{\subset}\,x$ is a subset with $5$ input bits. The gate $C5(f;B)$ cyclically shifts the $5$-bit register $B$ if $f(x){=}1$, and does nothing when $f(x){=}0$.  To implement $C5(f;B)$, we first break it down into a product of $6$ gates of the form $C5|_{M_i}(f(x{\setminus}B);B)$, where $i\,{\in}\,\{1,2,3,4,5,6\}$, each $M_i$ is a fixed set of Boolean $5$-tuples, and $f$ are symmetric Boolean functions. The gate $C5|_{M_i}$ restricts the operation of the corresponding gate $C5$ onto the set $M_i$ and simultaneously separates the set $B$ of bits being cycle-shifted from the set $x{\setminus}B$ controlling these shifts.  This allows to employ Barrington's theorem~\cite{barrington1989bounded} to implement the gates $C5|_{M_i}(f(x{\setminus}B);B)$ in the ancilla-free fashion by expressing them as polynomial-size branching programs with the input $x{\setminus}B$ and computing into $B$.  Each instruction in such program realizes a permutation of $5$-bit strings controlled by a single bit and it can thus be mapped into a reversible circuit over $6\,{=}\,5{+}1$ wires. 

Next we introduce our quantum ancilla-free circuit.  Let $U_{\mathbf{hwb}}$ be the $n$-qubit unitary operator implementing the $\mathbf{hwb}$ function. By definition, $U_{\mathbf{hwb}}|x\rangle = \mathsf{C}^{x_1+x_2+\ldots+x_n}|x\ra$, where $\mathsf{C}$ is the cyclic shift of $n$ qubits.  Suppose we can find an $n$-qubit Hamiltonian $H$ such that $\mathsf{C}\,{=}\,e^{iH}$ and $H$ commutes with the Hamming weight operator $W\,{=}\,\sum_{j=1}^n |1\ra\la 1|_j$.  Then $U_{\mathbf{hwb}}=e^{iHW}$.  Thus it suffices to construct a quantum circuit simulating the time evolution under the Hamiltonian $HW$.  Since the cyclic shift $\mathsf{C}$ is analogous to the translation operator for a particle moving on a circle, the Hamiltonian $H$ generating the cyclic shift $\mathsf{C}$ is analogous to the particle's momentum operator.  This observation suggests that $H$ can be diagonalized by a  suitable Fourier transform.  We formalize this intuition using the language of fermions and the fermionic Fourier transform, which is routinely used in Physics and quantum simulation algorithms \cite{babbush2017low, kivlichan2018quantum}.  The desired Hamiltonian $H$ such that $\mathsf{C}\,{=}\,e^{iH}$ is shown to have the form $H\,{=}\,V^\dag H' V$, where $V$ is a (modified) fermionic Fourier transform and $H'$ is a simple diagonal Hamiltonian.  We also show that $V$ commutes with the Hamming weight operator $W$, so that $U_{\mathbf{hwb}}\,{=}\,e^{iHW}\,{=}\,V^\dag e^{iH'W} V$.  We demonstrate that each layer in this decomposition of $U_{\mathbf{hwb}}$ can be implemented by a quantum circuit of size $O(n^2)$.

The rest of the paper is organized as follows.  \sec{revanc} introduces a simple modification of the known $O(n\log^2(n))$-gate $O(\log(n))$-ancilla reversible circuit that requires $O(n\log(n))$ gates and $O(\log(n))$ ancillary bits. \sec{revnoa} describes an $O(n^{6.42})$-gate ancilla-free reversible circuit. \sec{quantumnsquared} reports an ancilla-free $O(n^2)$-gate quantum circuit. 
These sections are independent of each other and can be read in any order.
Appendices~A and~B prove technical lemmas stated in \sec{quantumnsquared}.

\section{Reversible circuit of size $O(n\log n)$ using ancillas}
\label{sec:revanc}

We start with the description of a modification of the previously reported classical/reversible circuit that implements $\mathbf{hwb}$ with $O(n\log(n))$ gates and $O(\log(n))$ ancillae \cite{maslov2005reversible}.  Compared to \cite{maslov2005reversible}, our circuit features favorable asymptotics. However, it uses twice the computational/ancillary space.

Similarly to \cite{maslov2005reversible}, we break down the computation into three stages:
\begin{enumerate}
    \item Compute the input weight $W=x_1{+}x_2{+}\ldots{+}x_n$. 
    \item Apply controlled-$\swapgate$ gates to SWAP inputs into their correct position as specified by the $\mathbf{hwb}$. 
    \item Restore the value of ancillary register to $\ket{0}$ by appending the inverse of the stage 1.
\end{enumerate}
Note that the stage 3. is omitted in \cite{maslov2005reversible}, allowing a direct comparison to our circuit illustrated in \fig{hwb7}. The difference between our construction and \cite{maslov2005reversible} is how we compute the input weight.  Specifically, we use the same ``plus-one'' approach to calculate the weight into the ancillary register, however, we implement the integer increment function differently. Given input $x_i$, $1 {\leq} i {\leq} n$, the resister $w_1,w_2,\ldots,w_{\lfloor\log(i)\rfloor{+}1}$, where the input weight is being computed into, and temporary storage $t_1,t_2,\ldots,t_{\lfloor\log(i)\rfloor{-}1}$, ``increment by one'' works as follows. If $i\,{=}\,1$, apply $\cnotgate(x1;w1)$. For $i\,{>}\,1$:
\begin{enumerate}
    \item if $i{>}3$: apply Toffoli gate to $\ket{x_i,w_1,t_1}$; for $j$ from $2$ to $\lfloor\log(i)\rfloor{-}1$ apply the Toffoli gate $\toffoligate(t_{j-1},w_j;t_j)$;
    \item if $i\,{=}\,2$ or $i\,{=}\,3$ apply $\toffoligate(x_j,w_1;w_2)\cnotgate(x_j;w_1)$; 
    \newline else apply $\toffoligate(t_{\lfloor\log(i)\rfloor-1},w_{\lfloor\log(i)\rfloor};w_{\lfloor\log(i)\rfloor+1})\cnotgate(t_{\lfloor\log(i)\rfloor-1};w_{\lfloor\log(i)\rfloor})$.
    \item if $i{>}3$: for $j$ from $\lfloor\log(i)\rfloor{-}1$ down to $2$ apply the half adder, computed by the circuit $\toffoligate(t_{j-1},w_j;t_j)\cnotgate(t_{j-1};w_j)$. Apply $\toffoligate(x_i,w_1;t_1)\cnotgate(x_i;w_1)$.
\end{enumerate}
In our implementation, the $t$ register is used to store necessary digit shifts.  Advertised asymptotics follow by inspection of the above construction.  We furthermore illustrated our circuit in \fig{hwb7} for $n{=}7$.  
 
\begin{figure}[t]
\centerline{\includegraphics[height=4cm]{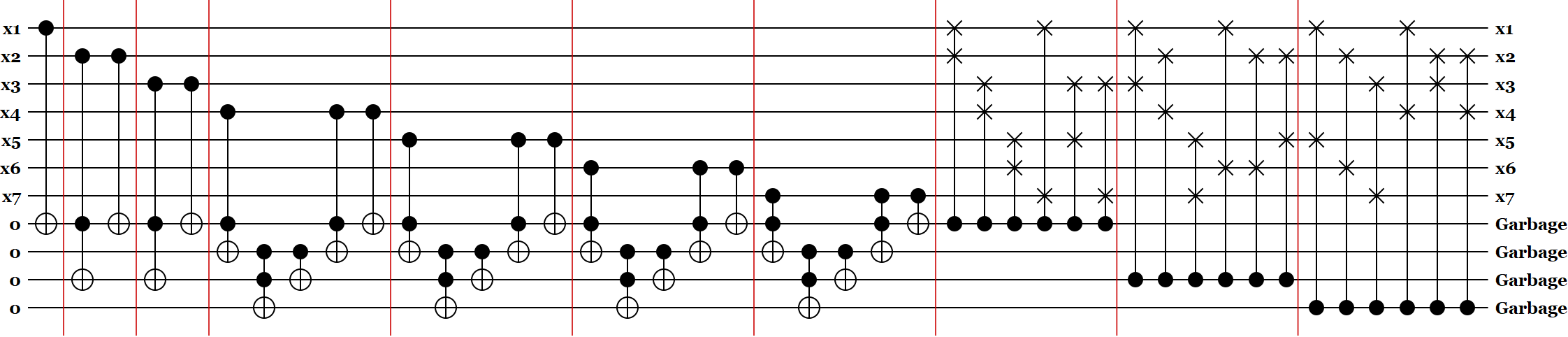}}
\caption{10-stage reversible circuit applying the $7$-bit $\mathbf{hwb}$ to
$\ket{x_1x_2x_3x_4x_5x_6x_7w_1t_1w_2w_3}$.  
Each of first $7$ $\cnotgate/\toffoligate$ gate stages increments $\ket{w_1w_2w_3}$ by one depending on the value of input variable, next $3$ $\fredkingate$ gate stages perform controlled-$\swapgate$.  Vertical red lines separate these $10$ stages. Not shown is Garbage uncomputation that can be performed by appending the inversion of the weight calculation circuit ($\cnotgate/\toffoligate$ gate part).}\label{fig:hwb7}
\end{figure}

\section{Ancilla-free reversible circuit of size $O(n^{6.42})$}\label{sec:revnoa}

In this section we show how to construct an ancilla-free classical reversible circuit of size $poly(n)$ implementing $\mathbf{hwb}$. We focus on $n\,{\geq}\,5$, noting that optimal circuits with $n$ up to $4$ are already known. 

Let $n$ be the total number of bits, and $x=(x_1,x_2,...,x_n)\in\{0,1\}^n$ be the input. In some discussions where it is convenient, we label these bits by the integers $\{0,1,\ldots,n{-}1\} \,{=}\, \ZZ_n$.  Suppose $B\,{\subseteq}\, \ZZ_n$ is a subset of $5$ bits and $f: \{0,1\}^n\to \{0,1\}$ is a symmetric Boolean function
(that is, $f(x)$ depends only on the Hamming weight of $x$).
Define a reversible gate 
\[
C5(f;B): \, \{0,1\}^n \to \{0,1\}^n,
\]
where the output is obtained from the input $x$ by applying the cyclic shift to the register $B$ if $f(x){=}1$.  Otherwise, when $f(x){=}0$, the gate does nothing.  Note that, because the symmetric function $f$ does not depend on the order of the bits, $C5(f;B)$ is a permutation of the set $\{0,1\}^n$.  Moreover, $C5(f;B)$ is an even permutation, since it is a product of length-$5$ cycles and each length-$5$ cycle is an even permutation.

Define $C(f;(i_0,i_1,...,i_{t-1}))$ to be a reversible gate that applies the cyclic shift of some $t$ bits defined by the cycle $(i_0,i_1,...,i_{t-1})$  (where $i_0,i_1,...,i_{t-1} \in \ZZ_n$ are all distinct) if the symmetric function $f$ evaluates to one and does nothing otherwise. We call $i_0,i_1,...,i_{t-1}$ the targets.  We call a collection of $C$-type gates a layer when the sets of their targets do not overlap. 

We next construct $\mathbf{hwb}$ by first expressing it as a circuit with the $C$-type gates, then breaking down the $C$-type gates into elementary reversible gates and $C5$-type gates, and finally expressing the $C5$-type gates in terms of the elementary reversible gates.

\begin{lemma}\label{lem:hwbwithc}
The $n$-bit $\mathbf{hwb}$ function can be implemented by an ancilla-free circuit with $\lfloor\log(n)\rfloor+1$ layers of $C$-type gates. 
\end{lemma}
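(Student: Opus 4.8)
The plan is to implement $\mathbf{hwb}$, which performs a cyclic shift of the $n$-bit input by its Hamming weight $W$, using a logarithmic number of layers by a repeated-squaring / binary-expansion strategy on the shift amount. The key observation is that shifting by $W$ can be achieved by writing $W$ in binary as $W = \sum_{k=0}^{\lfloor\log n\rfloor} 2^k b_k$, where $b_k$ is the $k$-th bit of $W$, and composing conditional shifts by powers of two. Concretely, for each $k$ I would like a layer that performs the cyclic shift by $2^k$ positions precisely when $b_k(W)=1$, i.e.\ when the $k$-th binary digit of the Hamming weight is set. Since ``$b_k(W)=1$'' depends only on the Hamming weight of the input, it is a symmetric Boolean function $f_k$, so this conditional shift is exactly the kind of operation that $C$-type gates are built to express.

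First I would recall that a full cyclic shift by $2^k$ on $\ZZ_n$ decomposes into a disjoint collection of cycles: the permutation $j \mapsto j + 2^k \pmod n$ splits $\ZZ_n$ into $\gcd(n,2^k)$ cycles, each of length $n/\gcd(n,2^k)$, and crucially these cycles have pairwise disjoint target sets. Therefore the conditional shift-by-$2^k$-if-$f_k$ operation is a single layer of $C$-type gates in the sense defined above: a collection of gates $C(f_k;(\text{cycle}))$ whose targets do not overlap. Iterating $k$ from $0$ to $\lfloor\log n\rfloor$, the product of these layers applies the total shift $\sum_k 2^k b_k(W) = W$ whenever the input has weight $W$, which is exactly $\mathbf{hwb}$. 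This yields $\lfloor\log(n)\rfloor+1$ layers, matching the claim.

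The main step requiring care is verifying that the composition of the per-layer conditional shifts genuinely equals the shift by $W$ on every input, not just that the targets within each single layer are disjoint. The subtlety is that $f_k$ must be evaluated on the \emph{original} input weight, yet a cyclic shift preserves Hamming weight, so each intermediate state still has weight $W$ and hence $f_k$ reads the correct value regardless of the order in which the layers are applied; I would state this weight-preservation fact explicitly since it is what makes the decomposition well-defined and order-independent. A secondary point is the edge behavior: the shift amount should be taken modulo $n$ (a weight-$n$ input is fixed, and weight $0$ trivially), so I would confirm that $\sum_k 2^k b_k(W) \equiv W \pmod n$ produces the intended cyclic shift, which is immediate because shifting by $n$ is the identity.

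The step I expect to be the genuine obstacle is establishing that the symmetric predicates $f_k$ can be chosen so that the layers commute in the needed sense and that each layer is realizable with the correct target-disjointness for \emph{all} relevant values of $k$ simultaneously; in particular one must check that a single fixed set of cycles works for the shift-by-$2^k$ operation on $\ZZ_n$ for the given $n$, handling the cases where $2^k$ and $n$ share common factors. Once the disjoint-cycle structure of each power-of-two shift is pinned down and weight-preservation is invoked, the remaining bookkeeping is routine counting of the layers.
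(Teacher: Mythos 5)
Your proposal is correct and follows essentially the same route as the paper: define $f_k$ as the $k$-th bit of the Hamming weight, write $\mathbf{hwb}$ as the product of conditional shifts by $2^k$, and observe that each shift-by-$2^k$ permutation splits $\ZZ_n$ into $\gcd(n,2^k)$ disjoint cycles, making each factor a single layer of $C$-type gates. The points you flag as needing care (weight preservation under cyclic shifts, reduction mod $n$) are indeed the right ones and are handled implicitly in the paper by the fact that the $f_k$ are symmetric and $C$-type gates preserve Hamming weight.
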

\begin{proof}
We will create a circuit with $k$ layers numbered $0,1,...,\lfloor\log(n)\rfloor$.  At each layer, the $C$ gates take the form $C(f_k;*)$.  Select the symmetric functions $f_k$ as follows: let $f_k(x)\,{=}\,1$ iff the $k$th power of $2$ in the binary expansion of the weight $W=x_1{+}x_2{+}...{+}x_n$ equals one.  Note that $f_k$ are symmetric functions since the calculation of weight does not depend on the order the bits are added in. The function $\mathbf{hwb}$ can now be expressed as 
\begin{equation}\label{eq:hwblogC}
\mathbf{hwb} = C^{2^0}((0,1,...,n{-}1);f_0) C^{2^1}((0,1,...,n{-}1);f_1) \cdots C^{2^{\lfloor\log(n)\rfloor}}((0,1,...,n{-}1);f_{\lfloor\log(n)\rfloor}).
\end{equation}

For any $k=0,1,\dots,\lfloor\log(n)\rfloor$, let $g:=\text{GCD}(n,2^k)$ and $C_{i}:=C(f_k;(i,\,i{+}2^k \bmod n,\,...,\linebreak i{+}(\frac{n}{g}{-}1)2^k \bmod n))$. Then by elementary modular arithmetic,
\[
C^{2^k}((0,1,...,n{-}1);f_k)=C_{0}C_{1}\dots C_{g-1},
\]
and the targets of any two distinct $C_{i}$ in this product do not overlap. This shows that each of the $\lfloor\log(n)\rfloor{+}1$ factors in Eq.~\eqref{eq:hwblogC} can be written as a layer of $C$-type gates.
\end{proof}

\begin{figure}[t]
	\centerline{\includegraphics[height=4cm]{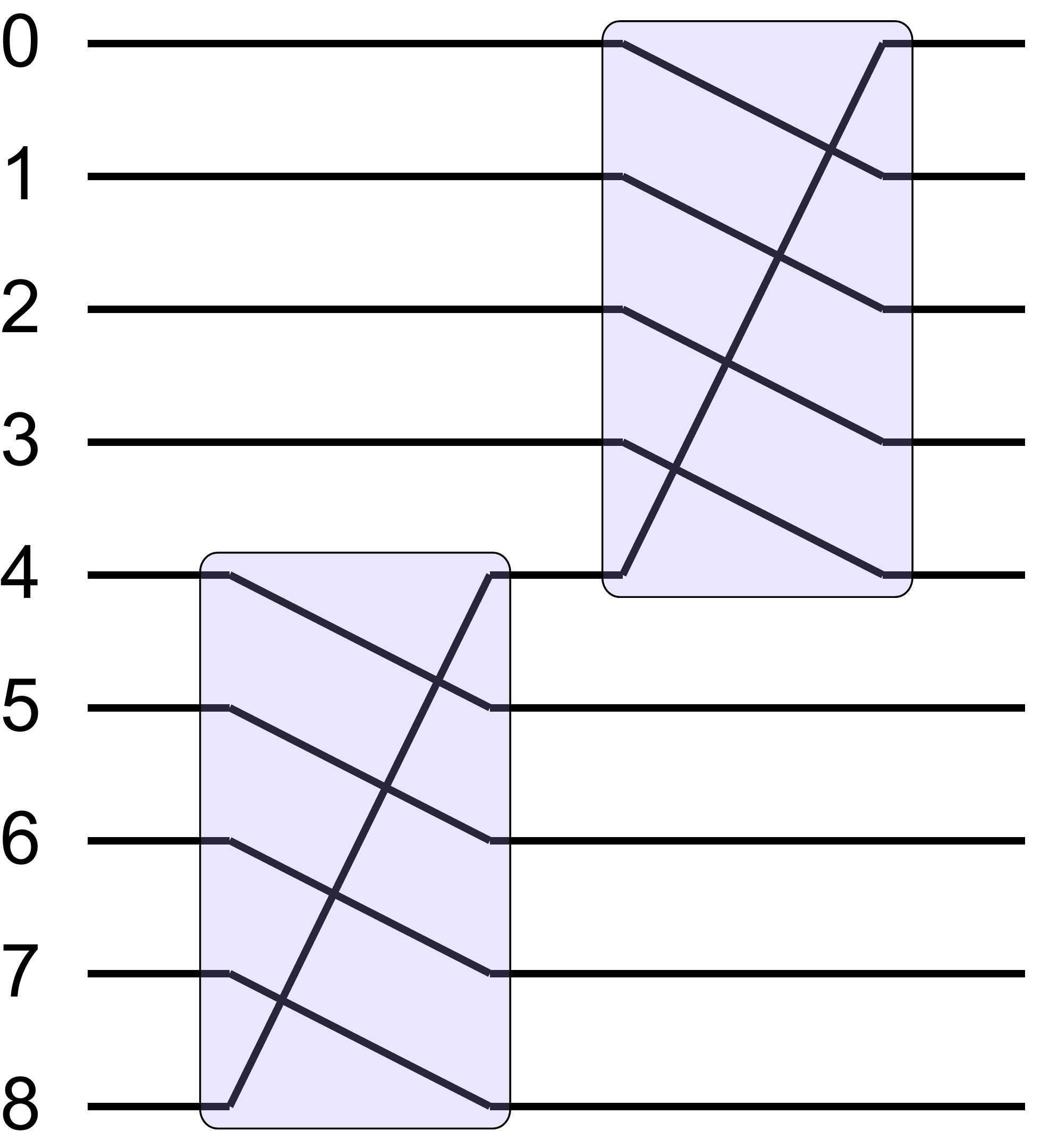}}
	\caption{Implementation of the $9$-bit cyclic shift $C(f;(0,1,2,3,4,5,6,7,8))$ using the gates $C5(f;(4,5,6,7,8))$ and $C5(f;(0,1,2,3,4))$.}
	\label{fig:cycles}
\end{figure}

We next implement each of $\lfloor\log(n)\rfloor{+}1$ layers of cyclic shift gates in \lem{hwbwithc} as circuits with $O(n)$ $C5$-type gates by expressing the cycles $(i_0,i_1,...,i_{t-1})$ as products of length-$5$ cycles.  Note that a length-$5$ cycle is always an even permutation and $(i_0,i_1,...,i_{t-1})$ is an odd permutation when $t$ is even.  It is not possible to implement an odd permutation as a product of even permutations. However, with one exception, the $C$-type gates $C_i$ come in pairs (recall that their number, $g$, is a power of two) and thus they can usually be paired up to form an even permutation that can then be decomposed into a product of length-$5$ cycles. The one exception is the leftmost gate in 
Eq.~\eqref{eq:hwblogC},
$C(f_0;(0,1,...,n{-}1))$, when $n$ is even.  We handle this case first.

\begin{figure}[t]
\centerline{\includegraphics[height=3.8cm]{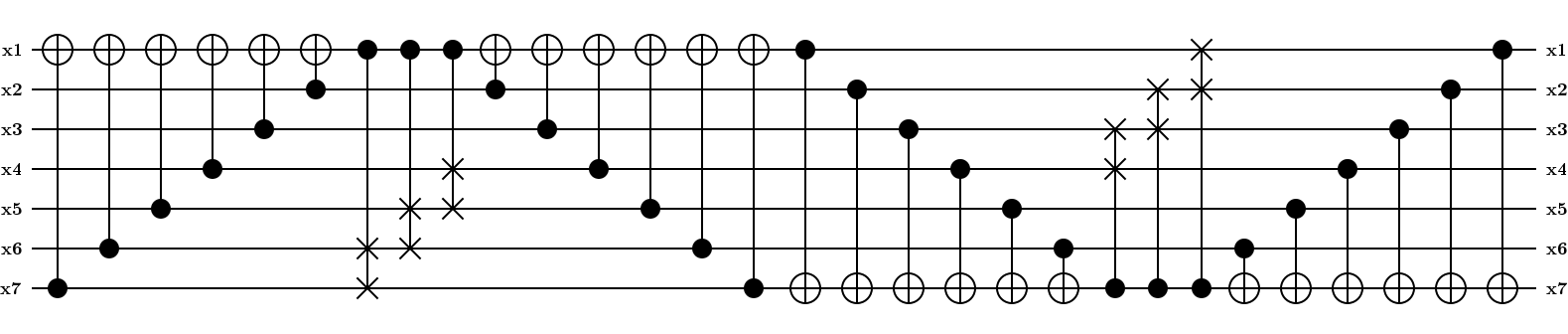}}
\caption{Implementation of $C(f_0;(x_1,x_2,x_3,x_4,x_5,x_6,x_7))$, where $f_0(x) = x_1 \oplus x_2 \oplus x_3 \oplus x_4 \oplus x_5 \oplus x_6 \oplus x_7$.}\label{fig:7bitC0}
\end{figure}

\begin{lemma}\label{lem:C0}
$C(f_0;(0,1,...,n{-}1))$ can be implemented by a reversible circuit with $O(n)$ elementary gates.
\end{lemma}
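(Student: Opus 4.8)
The plan is to build $C(f_0;(0,1,\dots,n{-}1))$ directly from $\cnotgate$ and $\fredkingate$ (Fredkin) gates rather than from $C5$-type gates, since---as already noted---this gate is an \emph{odd} permutation when $n$ is even and hence provably not expressible as a product of the (even) length-$5$ cycles. Here $f_0(x)=x_1\oplus\cdots\oplus x_n$ is the parity of the input, so the task is to perform a full cyclic shift of all $n$ bits conditioned on the parity $p:=x_1\oplus\cdots\oplus x_n$.

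First I would reduce the task to a parity-controlled swap cascade. Writing the length-$n$ cycle as a product of adjacent transpositions, $(0\,1\,\cdots\,n{-}1)=(0\,1)(1\,2)\cdots(n{-}2\,n{-}1)$, the conditional shift equals $\prod_{i=0}^{n-2}\fredkingate(p;i,i{+}1)$, a product of adjacent Fredkin gates all controlled by the \emph{same} value $p$. The key structural fact I would use is that every swap preserves the Hamming weight, and therefore the parity $p$, so $p$ is an invariant of the whole cascade and may be treated as a fixed control.

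The central step is to realize this common control ancilla-free with only $O(n)$ gates. Since $p$ is invariant, I would compute it once into a single designated wire---say position $n{-}1$---using a staircase of $n{-}1$ $\cnotgate$ gates from the remaining wires, then use that wire as the control of $n{-}2$ Fredkin gates that conditionally rotate the block $\{0,\dots,n{-}2\}$ among themselves (each is a legitimate single-control swap, as the control $n{-}1$ lies outside the swapped pair), and finally recombine. The accounting identity I would prove and exploit is that after the block has been conditionally rotated, the exclusive-or of the current contents of all $n$ wires equals exactly the value destined for the wrap-around position; consequently a second $\cnotgate$ staircase restores the parity wire to its correct output value. All three pieces use $O(n)$ gates.

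The main obstacle is the cyclic seam. The wire storing $p$ sits precisely at the wrap-around point, so it must simultaneously act as a stable control for the block rotation and, at the end, receive a genuine data value (the bit rotated in from the other end); moreover one wire adjacent to the seam is left holding the ``wrong'' value after the cheap block rotation, and its correct value can be recovered only from the global exclusive-or, not locally. Designing the $O(1)$ boundary-correction gates that repair these two seam wires conditioned on $p$---while never using a wire as both control and target of a single gate---and verifying that the resulting permutation is exactly $C(f_0;(0,1,\dots,n{-}1))$ in both the $p=0$ and $p=1$ cases is the delicate part; it is also where the even/odd-$n$ distinction manifests, which is why this gate is handled separately from the $C5$-based construction. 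I would close with the gate count: $2(n{-}1)$ for the two staircases, $n{-}2$ for the conditional block rotation, and $O(1)$ for the seam, i.e.\ $O(n)$ in total.
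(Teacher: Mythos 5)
Your overall strategy---exploit the invariance of the parity $p=x_1\oplus\cdots\oplus x_n$ under swaps, compute it into one wire with a $\cnotgate$ staircase, and use that wire to control a cascade of $\fredkingate$ gates---is the same idea the paper uses, but your execution has a genuine gap at exactly the point you defer as ``the delicate part,'' and that point is the entire content of the lemma. After $p$ is computed into wire $n{-}1$ and the block $\{0,\dots,n{-}2\}$ is conditionally rotated, the two seam wires are wrong whenever $p=1$. Your accounting identity does not repair this: the XOR of all $n$ wire contents is indeed the original content of wire $n{-}1$, but that value is destined for the \emph{other} end of the seam, not for the parity wire. Concretely, the second staircase XORs the parity of the rotated block, namely $p\oplus x_{n-1}$ (in your $0$-indexed labels, the original content of wire $n{-}1$ being $x_{n-1}$), into the wire holding $p$, leaving it with $x_{n-1}$---i.e.\ it merely uncomputes the parity back to that wire's original content, which is the correct output only in the $p=0$ branch where the gate acts as the identity. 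In the $p=1$ branch the two seam wires still need to be exchanged, the control for that exchange is $p$, and at that moment $p$ is stored on no wire. The promised ``$O(1)$ boundary-correction gates'' are never exhibited and cannot be uncontrolled gates, since the seam must be swapped iff $p=1$; so as written the construction implements the identity on the seam instead of the wrap-around transposition.

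The missing observation---which is essentially the paper's whole proof---is that because $p$ is invariant under every swap in the cascade, it can be recomputed at any stage into \emph{any} wire not touched by the transpositions about to be applied. The paper partitions the $n{-}1$ adjacent transpositions of the length-$n$ cycle into those avoiding the top wire (controlled by $p$ computed into the top wire) and the remaining one avoiding the bottom wire (controlled by $p$ computed into the bottom wire), wrapping each group in a compute/uncompute staircase; this costs $n{-}1$ controlled-$\swapgate$ gates and $4(n{-}1)$ $\cnotgate$ gates and never asks a wire to be both control and data. Your construction is rescued the same way: after your two staircases, recompute $p$ into a third wire (note this costs $\Theta(n)$ $\cnotgate$ gates, not $O(1)$, though the total remains $O(n)$), apply one $\fredkingate$ to the seam pair, and uncompute. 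Either adopt the two-group decomposition from the outset or supply this final compute--controlled-swap--uncompute block explicitly and verify both branches; without it the proof is incomplete.
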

\begin{proof}
The Boolean function $f_0(x) = x_1 \oplus x_2 \oplus ... \oplus x_n$ can be implemented on the top bit to control all bit SWAPs on the bottom bits, and it can be implemented on the bottom bit to control all bit SWAPs on the top bits. The number of controlled-$\swapgate$ gates required is $n{-}1$, and the total number of the $\cnotgate$ gates required to compute/uncompute the control register is $4(n{-}1)$.  We illustrated this construction in \fig{7bitC0} for $n{=}7$.
\end{proof}

\begin{lemma}\label{lem:lemC}
For $n\,{\geq}\,5$: 
\begin{enumerate}
    \item for $t\,{\leq}\,4$, pairs of two $C(f;(i_0,i_1,...,i_{t-1}))$ gates can be implemented by an ancilla-free circuit using constantly many gates $C5(f;B)$; 
    \item for odd $t\,{>}\,4$ the $C(f;(i_0,i_1,...,i_{t-1}))$ gate can be implemented by an ancilla-free circuit using $O(t)$ gates $C5(f;B)$.
    \item for even $t\,{>}\,4$ pairs of $C(f;(i_0,i_1,...,i_{t-1}))$ gates can be implemented by an ancilla-free circuit using $O(t)$ gates $C5(f;B)$;
\end{enumerate}
\end{lemma}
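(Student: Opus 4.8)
The plan is to reduce all three statements to pure facts about the symmetric group and then transport them back to gates. The key observation is that, because $f$ is symmetric and hence invariant under any permutation of the bits, the assignment $\sigma\mapsto C(f;\sigma)$ (apply the bit-permutation $\sigma$ iff $f(x){=}1$) is a group homomorphism: for any $\sigma,\tau$ one has $C(f;\sigma)C(f;\tau)=C(f;\sigma\tau)$, since the value of $f$ on the intermediate string equals $f(x)$. Under this homomorphism $C5(f;B)$ is exactly $C(f;c_B)$, where $c_B$ is a length-$5$ cycle on $B$. Therefore writing a target permutation as a product of $5$-cycles immediately yields the corresponding factorization of the controlled gate into $C5$ gates, with the gate count equal to the number of $5$-cycles used. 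So it suffices to prove, for the underlying permutations: (1) a product of two $t$-cycles with $t\le 4$ is a product of $O(1)$ many $5$-cycles; (2) a single $t$-cycle with odd $t>4$ is a product of $O(t)$ many $5$-cycles; (3) a product of two (disjoint) $t$-cycles with even $t>4$ is a product of $O(t)$ many $5$-cycles.

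The single workhorse is the \emph{peeling} identity
\[
(a_1 a_2 \cdots a_t) = (a_1 a_2 a_3 a_4 a_5)\,(a_5 a_6 \cdots a_t),
\]
valid for all $t\ge 5$, which I will verify by tracking the image of each point; it trades a $t$-cycle for one $5$-cycle times a $(t-4)$-cycle. Iterating it peels a $5$-cycle off the front and shortens the cycle by $4$ at each step while preserving parity, since each $5$-cycle is even. For part (2), $t$ is odd, so iterating brings the length down through odd values to either a residual $5$-cycle (when $t\equiv 1\bmod 4$) or a residual $3$-cycle (when $t\equiv 3\bmod 4$); in the first case we are done, and in the second the $3$-cycle is even and hence expressible by a constant number of $5$-cycles acting on its three bits together with two helper bits (available since $n\ge 5$) that are returned to their places. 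This costs $O(t)$ peeling steps plus $O(1)$ residual gates, i.e.\ $O(t)$ in total.

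For the short even permutations I will invoke the standard fact that the $5$-cycles generate the alternating group $A_m$ for every $m\ge 5$; since the relevant ground set is of bounded size, any even permutation on it is a product of a constant number of $5$-cycles, padding with helper bits drawn from the $n\ge 5$ available wires whenever the support has fewer than $5$ elements. Part (1) then follows at once: a product of two $t$-cycles ($t\le 4$) is even, since each $t$-cycle has parity $(-1)^{t-1}$ and two of them give $+1$, and its support has at most $8$ elements, so it is a product of $O(1)$ many $5$-cycles. Part (3) combines peeling with part (1): apply the peeling identity independently to each of the two disjoint even-length cycles, using only that cycle's own bits, until each is reduced to a single short odd cycle (a transposition or a $4$-cycle) at a cost of $O(t)$ many $5$-cycles; the product of the two residual odd cycles is even and supported on at most $8$ bits, so part (1) finishes it with $O(1)$ more gates, for $O(t)$ in total.

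The place that needs the most care is the parity and reachability bookkeeping rather than the counting. A single even-length cycle is an odd permutation while every product of $5$-cycles is even, so no single even-length $C$-gate can be realized alone; this is exactly why parts (1) and (3) pair the gates, and I will state this obstruction explicitly to justify the pairing. The remaining subtlety is guaranteeing that the $5$-element sets supporting the $5$-cycles always exist and that any borrowed helper bits are restored to the identity; both follow from $n\ge 5$ and from realizing each short residual as an element of $A_5$ that fixes the helpers, rather than as a bare cycle. Everything else is routine verification of the peeling identity and of the finite-group generation fact.
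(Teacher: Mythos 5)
Your proof is correct, and its engine is the same as the paper's: write the controlled permutation as a product of $5$-cycles by repeatedly peeling a $5$-cycle off a long cycle, and handle the parity obstruction (an even-length cycle is odd, while every product of $5$-cycles is even) by pairing gates. Your explicit statement that $\sigma\mapsto C(f;\sigma)$ is a homomorphism because $f$ is symmetric is exactly the justification the paper leaves implicit when it says permutation identities ``trivially translate to circuits,'' and your peeling identity is the paper's Case~1 decomposition $(0,1,\dots,4p)=(4p{-}4,\dots,4p)\cdots(0,1,2,3,4)$ in disguise. Where you genuinely diverge is in the base cases: the paper writes down explicit two-$5$-cycle identities for a pair of transpositions, a pair of $3$-cycles, and the residual $3$-cycle in the $t\equiv 3\pmod 4$ case, whereas you invoke the fact that $5$-cycles generate $A_m$ for $m\ge 5$ together with finiteness of the ground set to conclude ``$O(1)$ gates'' non-constructively. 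That is perfectly sound for the asymptotic claim and arguably cleaner to verify, but it does not by itself yield a circuit; the paper's explicit identities are what make the final constant in the $\frac{n\log n}{4}$-to-$\frac{n\log n}{2}$ gate-count estimate (and the actual synthesis) possible. Your routing of part~3 also differs slightly (you peel each even cycle down to a matching pair of short odd residuals and finish with part~1, while the paper splits off a transposition from each even cycle and applies part~2 to the remaining odd cycles), but both give $O(t)$. One small point to be careful about if you write this up: your peeling identity $(a_1\cdots a_t)=(a_1a_2a_3a_4a_5)(a_5a_6\cdots a_t)$ holds only under the right-to-left composition convention, so fix the convention before ``tracking the image of each point.''
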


\begin{proof}\,

\noindent{\bf 1.} There are three cases to consider: $t\,{=}\,2$, $t\,{=}\,3$, and $t\,{=}\,4$. 
\begin{itemize}
    \item[$t\,{=}\,2$.] $C(f;(x_1,x_2))$ and $C(f;(y_1,y_2))$ can be implemented simultaneously by the circuit \linebreak $C5(f;(y_1,x_1,y_2,a,x_2))C5(f;(a,y_1,x_1,y_2,x_2))$.  This is equivalent to saying that the following permutation equality holds: $(x_1,x_2)(y_1,y_2) = (y_1,x_1,y_2,a,x_2)(a,y_1,x_1,y_2,x_2)$. Note that the bit `$a$' can be found since $n\,{\geq}\,5$. We will show only the permutation equalities in the rest of the proof, since it is trivial to translate those to circuits. 
    \item[$t\,{=}\,3$.] To implement a pair of gates $C(f;(x_1,x_2,x_3))$ and $C(f;(y_1,y_2,y_3))$ rely on the cycle product equality $(x_1,x_2,x_3)(y_1,y_2,y_3) = (x_1,y_1,x_2,y_2,y_3)(x_3,x_1,y_1,x_2,y_2)$.
    \item[$t\,{=}\,4$.] Cycles $(x_1,x_2,x_3,x_4)$ and $(y_1,y_2,y_3,y_4)$ can be obtained by the equality
    \begin{eqnarray*}
    (x_1,x_2,x_3,x_4)(y_1,y_2,y_3,y_4) \\
    = (x_1,x_2)(x_1,x_3,x_4) \cdot (y_1,y_2)(y_1,y_3,y_4) \\
    = (x_1,x_2)(y_1,y_2) \cdot (x_1,x_3,x_4)(y_1,y_3,y_4),
    \end{eqnarray*}
    where first and second part require two $C5$ gates each, as described in the cases $t\,{=}\,2$ and $t\,{=}\,3$, for a total of four $C5$ gates.
\end{itemize}

\noindent{\bf 2.} The goal is to develop a circuit with $C5$ gates implementing the gate $C(f;(0,1,...,t{-}1))$, where $t$ is odd. There are two cases to consider, $t\,{=}\,4p{+}1$ and $t\,{=}\,4p{+}3$.

\noindent\underline{Case 1: $t\,{=}\,4p{+}1,\;p\geq 1$.} We want to implement the integer permutation given by the cyclic shift $(0,1,...,4p)$ by the cyclic shifts of length $5$.  
This can be done as follows, 
\[
(0,1,...,4p) = (4p-4,4p-3,4p-2,4p-1,4p) (4p-8,4p-7,4p-6,4p-5,4p-4) \cdots (0,1,2,3,4).
\]
This decomposition uses $p$ length-5 cycles, resulting in the ability to implement $C(f;(0,1,...,t{-}1))$ gate using $p{=}\frac{t-1}{4}$ $C5(f;B)$ gates. This construction is illustrated in \fig{cycles} for $n\,{=}\,9$.

\noindent\underline{Case 2: $t\,{=}\,4p{+}3,\;p\geq 1$.} Use the formula
\begin{eqnarray*}
(0,1,...,4p{+}2) 
= (4p,4p{+}1,4p{+}2)\cdot(0,1,...,4p) 
\\ = (4p{+}2,4p,2,1,0)(4p{+}1,4p{+}2,0,1,2) \cdot (0,1,...,4p).
\end{eqnarray*}
Since we already implemented $(0,1,...,4p)$ with $p$ $C5$ gates in Case 1 above, this implementation requires $\frac{t+5}{4}$ $C5$ gates. 

\noindent{\bf 3.} The goal is to implement a pair of  $C(f;(x_1,x_2,...,x_t))$ and $C(f;(y_1,y_2,...,y_t))$ where $t\,{>}\,4$ is even. Write 
\begin{eqnarray*}
(x_1,x_2,...,x_t) \cdot (y_1,y_2,...,y_t) \\
= (x_1,x_2)(x_1,x_3,x_4,...,x_t) \cdot (y_1,y_2)(y_1,y_3,y_4,...,y_t) \\
= (x_1,x_2)(y_1,y_2) \cdot (x_1,x_3,x_4,...,x_t) \cdot (y_1,y_3,y_4,...,y_t)
\end{eqnarray*}
Here, $(x_1,x_2)(y_1,y_2)$ requires two $C5$ gates per item {\bf 1.} case $t\,{=}\,2$, and each of $(x_1,x_3,x_4,...,x_t)$ and $(y_1,y_3,y_4,...,y_t)$ requires $O(t)$ gates per item {\bf 2.}
\end{proof}

Observe how the above proof implies that the number of $C5$ gates required to implement each of $C^{2^k}((0,1,...,n{-}1);f_k)$ stages in \eq{hwblogC} for $k=1,2,...,\lfloor\log(n)\rfloor$ is between $\frac{n}{4}+Const$ and $\frac{n}{2}+Const$. Thus, per \lem{C0}, the total number of elementary and $C5$ gates required to implement $\mathbf{hwb}$ over $n$ qubits is between $\frac{n \log(n)}{4} + O(n)$ and  $\frac{n \log(n)}{2} + O(n)$.

We next show how to implement $C5(f_k;B)$ as a branching program, using Barrington's theorem~\cite{barrington1989bounded}, by closely following the original proof.  In preparation for using Barringon's theorem, we first remove the dependence of the functions $f_k$ in $C5(f_k;B)$ on the variables inside the set $B$, to allow the desired cyclic shift to be controlled by the values of $n{-}5$ variables outside the set $B$ itself.  To accomplish this, note that $C5(f_k;B)$ acts trivially on the strings $00000$ and $11111$; those can be ignored.  This leaves $30$ non-fixed by the operation 5-bit strings that can be partitioned into six disjoint subsets $M_1,M_2,M_3,M_4,M_5$, and $M_6$, with $5$ strings each.  Every subset $M_i$ contains $5$ cyclic shifts of some fixed $5$-bit string, and is defined as follows:
\begin{align} \label{eq:Ms}
M_1 :=& \{10000,01000,00100,00010,00001\}, \\
M_2 :=& \{01111,10111,11011,11101,11110\}, \nonumber \\
M_3 :=& \{11000,01100,00110,00011,10001\}, \nonumber  \\
M_4 :=& \{10100,01010,00101,10010,01001\}, \nonumber \\ 
M_5 :=& \{00111,10011,11001,11100,01110\}, \nonumber\\
M_6 :=& \{01011,10101,11010,01101,10110\}. \nonumber
\end{align}
We implement $C5(f_k;B)$ by performing the cyclic shifts of a single subset $M_i$ per time. 


First, let us introduce some more notations. Given a bit string $x\,{\in}\, \{0,1\}^n$, write $x=(y,b)$, where $b\,{\in}\, \{0,1\}^5$ is the restriction of $x$ onto the register $B$ and $y\,{\in}\, \{0,1\}^{n-5}$ is the rest of $x$.  Let $w_i\,{\in}\, \{1,2,3,4\}$ be the Hamming weight of bit strings in $M_i$ (note that all strings in the same subset $M_i$ have the same weight).  Define a Boolean function $f_{k,i} : \{0,1\}^{n-5} \to \{0,1\}$ such that $f_{k,i}(y){=}1$ iff $2^k$ appears in the binary expansion of $|y|{+}w_i$. Then 
\[
f_k(x)=f_k(y,b)=f_{k,i}(y) \quad \mbox{for any } b\in M_i.
\] 
Define a gate
\[
C5|_{M_i}(f_k;B) : \, \{0,1\}^n \to \{0,1\}^n
\]
that maps an input $x\,{=}\,(y,b)$ to an output $x'\,{=}\,(y,b')$ according to the following rules:
\begin{itemize}
    \item if $f_{k,i}(y)\,{=}\,0$ then $b'\,{=}\,b$;
    \item if $f_{k,i}(y)\,{=}\,1$ and $b\,{\notin}\,M_i$ then $b'\,{=}\,b$;
    \item if $f_{k,i}(y)\,{=}\,1$ and $b\,{\in}\, M_i$ then $b'\,{\in}\, M_i$ is obtained from $b$ by cyclically shifting the elements of $M_i$.
\end{itemize}
By definition, the cyclic shift of bits in the register $B$ can be realized by cyclically shifting elements of each subset $M_i$ for $i=1,2,3,4,5,6$.  Thus
\begin{eqnarray}\label{eq:GBk}
C5(f_k(x);B) = \prod_{i=1}^6 C5|_{M_i}(f_k(y);B).
\end{eqnarray}
Here the order in the product does not matter because the gates $C5|_{M_i}(f_k;B)$ pairwise commute.  Note that the dependence of function $f_k$ on the variables inside the set $B$ has now been removed, and we can proceed to implementing $C5|_{M_i}(f_k;B)$ as a branching program, and finally mapping the instructions used by the branching program into reversible gates.

Recall some relevant notation used in Barrington's paper \cite{barrington1989bounded}. Let
$S_5$ be the group of permutations of $5$ numbers, $\{1,2,3,4,5\}$.  
Given a $5$-tuple of distinct integers $a_1,a_2,a_3,a_4,$ and $a_5$, we write $(a_1,a_2,a_3,a_4,a_5)$ to denote the $5$-cycle. Let $e$ be the identity permutation.
A branching program of length $L$ with $m$ Boolean input variables $y_1,y_2,...,y_m$  is a list of instructions $\langle y_i, \sigma_i, \tau_i \rangle$ with $i=1,2,...,L$ and $\sigma_i, \tau_i \in S_5$, such that $\sigma_i$ is applied if $y_i{=}1$, and $\tau_i$ is executed when $y_i{=}0$.
Given a permutation  $\sigma \in S_5$, the branching program is said to $\sigma$-compute a Boolean function $f(y)$ if executing the list of all instructions in the program results in $e$ (the identity permutation) for all inputs $y$ such that $f(y){=}0$ and permutation $\sigma$ for all inputs $y$ such that $f(y){=}1$.

Barrigton's theorem asserts that any function in the class $\mathrm{NC}^1$ can be $(1,2,3,4,5)$-computed by a branching program of polynomial size \cite{barrington1989bounded}.  We next specialize the proof of the theorem to explicitly develop a short branching program that $(1,2,3,4,5)$-computes the Boolean function $f_{k,i}(y)$. Recall that $f_{k,i}(y){=}1$ iff $2^k$ appears in the binary expansion of $y_1{+}y_2{+}\ldots{+}y_{n-5}{+}w_i$ with $w_i\in \{1,2,3,4\}$ being the weight of bit strings in $M_i$. It suffices to develop a branching program computing the Boolean function $f_k(y)$ with $y\in \{0,1\}^m$ and $m\,{=}\,n{-}5$ by appending at most two constant binary variables $1$ encoding $w_i$ to the bit string $y$.

While the original proof \cite{barrington1989bounded} explored the mapping of logarithmic-depth classical circuits over $\{\mathsf{AND}, \mathsf{OR}\}$ library, we focus on the classical circuits over 3-input 1-output  $\mathsf{MAJ}(a,b,c):=ab \oplus bc \oplus ac$ and $\mathsf{XOR}(a,b,c):= a \oplus b \oplus c$ gates.  Recall that the library $\{\mathsf{MAJ}, \mathsf{XOR}\}$ is universal for classical computations if constant inputs are allowed.

\begin{lemma}
\label{lem:S5}
Suppose $y$ is an $m$-bit string and $f_k(y)$ is the $k$-th bit in the binary representation of $W=y_1{+}y_2{+}...{+}y_m$. The function $f_k(y)$ can be $(1,2,3,4,5)$-computed by a branching program of size $O(m^{5.42})$.
\end{lemma}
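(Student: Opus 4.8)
The plan is to follow the Barrington construction but with two quantitative refinements: a careful choice of the logarithmic-depth circuit computing $f_k$, and a bookkeeping of how the branching-program length multiplies as we recurse through the circuit. First I would construct an explicit fan-in-$2$ circuit over the $\{\mathsf{MAJ},\mathsf{XOR}\}$ library that computes the single output bit $f_k(y)$, the $k$-th bit of the Hamming weight $W=y_1+\cdots+y_m$. The natural route is a carry-save / prefix-sum adder tree: one computes $W$ in binary by a balanced tree of full adders, where each full adder produces a sum bit via $\mathsf{XOR}(a,b,c)$ and a carry bit via $\mathsf{MAJ}(a,b,c)$. This is exactly why the paper switched from $\{\mathsf{AND},\mathsf{OR}\}$ to $\{\mathsf{MAJ},\mathsf{XOR}\}$: the two adder outputs are each a single gate in this library. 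The key parameter to extract is the \emph{depth} $d$ of this circuit as a function of $m$; a standard Wallace/Ofman-tree or prefix construction gives depth $d = c\log m$ for an explicit constant $c$, and I would pin down $c$ precisely since it controls the final exponent.

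Next I would invoke the Barrington recursion verbatim. For each gate $g$ of the circuit computing a subformula $h$, if $h$ is $\sigma$-computed by a program $P_h$ and we want to instead $\sigma'$-compute it for a chosen $5$-cycle $\sigma'$, conjugation $P_h \mapsto \rho P_h \rho^{-1}$ (with $\rho$ the permutation taking $\sigma$ to $\sigma'$) costs only a constant-factor relabeling and no length blowup. The length blowup comes from combining subprograms at each gate. The standard identity is that for two $5$-cycles $\sigma,\tau$ whose commutator $\sigma\tau\sigma^{-1}\tau^{-1}$ is again a $5$-cycle, one can $\mathsf{AND}$ two inputs by concatenating four (suitably conjugated) subprograms; since $\{\mathsf{MAJ},\mathsf{XOR}\}$ is expressible in terms of $\mathsf{AND}/\mathsf{OR}/\mathsf{NOT}$ with bounded fan-in, each gate of the circuit multiplies the branching-program length by a constant factor $r$ (the branching factor of the commutator gadget). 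Over a circuit of depth $d$ this yields length $L = r^{d}\cdot O(1)$, and with $d = c\log m$ this is $L = m^{c\log r}\cdot O(1) = O(m^{\alpha})$ for $\alpha = c\log r$. Matching the claimed exponent $5.42$ amounts to verifying that the particular adder circuit and the particular commutator gadget give $\alpha \le 5.42$; I would present the explicit depth count and the explicit value of $r$ (for the usual gadget $r=4$, so $\alpha = 2c$, forcing a depth bound $d \le 2.71\log m$).

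The main obstacle I expect is controlling the exponent, not the existence of the program. Barrington's theorem as usually stated gives only ``polynomial size'' with an unspecified and typically large exponent, because the naive recursion multiplies length by $4$ at \emph{every} gate and the circuit depth for a generic $\mathrm{NC}^1$ formula can be as large as $O(\log m)$ with a bad constant, or the formula \emph{size} rather than depth governs the blowup. To get $O(m^{5.42})$ one must (i) use a shallow, explicit adder of small-constant depth rather than a generic $\mathrm{NC}^1$ formula, and (ii) be careful whether the recursion charges per gate along a path (depth) or per gate total (size); the honest accounting is that each \emph{formula leaf-to-root path} contributes, so one wants the formula depth, and one must convert the bounded-fan-in circuit for $f_k$ into a formula without an exponential size explosion. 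I would therefore spend the bulk of the argument (likely deferred to an appendix) making the adder a \emph{formula} of depth $d \le 2.71\log_2 m$ over $\{\mathsf{MAJ},\mathsf{XOR}\}$ and checking that the commutator gadget for each of $\mathsf{MAJ}$ and $\mathsf{XOR}$ costs branching factor at most $4$; a slightly suboptimal constant in the depth is exactly what produces the non-integer exponent $5.42 = \log_2 r \cdot d/\log_2 m$ rather than a clean integer. Finally I would note the trivial reductions: prepending the $w_i \in \{1,2,3,4\}$ as at most two constant-$1$ leaves changes neither the depth nor the asymptotics, so the same bound $O(m^{5.42})$ holds for every $f_{k,i}$.
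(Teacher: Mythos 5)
Your overall architecture matches the paper's: a carry-save full-adder tree over $\{\mathsf{MAJ},\mathsf{XOR}\}$ of depth $\log_{3/2}(m)+O(\log\log m)$, followed by the Barrington recursion, with the $w_i$ absorbed as constant leaves. However, there is a genuine gap in the quantitative core, which is the entire content of the lemma (mere polynomiality is immediate from Barrington's theorem). You propose to charge a branching factor of $r=4$ per fan-in-$2$ gate via the standard commutator gadget for $\mathsf{AND}$, and to recover the exponent $5.42$ you back-solve for a fan-in-$2$ depth of $2.71\log_2 m$. This accounting does not close. The carry-save tree has depth $\approx 1.71\log_2 m$ measured in \emph{three-input} $\mathsf{MAJ}/\mathsf{XOR}$ gates; once you decompose each such gate into a fan-in-$2$ formula over $\mathsf{AND}/\mathsf{OR}/\mathsf{NOT}$ (depth at least $2$, and $3$ for the natural $\mathsf{MAJ}(a,b,c)=(a\wedge b)\vee(b\wedge c)\vee(a\wedge c)$), the fan-in-$2$ depth is at least $3.42\log_2 m$ and more realistically about $5.1\log_2 m$, which with $r=4$ yields an exponent of at least $6.8$ and more realistically above $10$ --- not $5.42$.

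The missing idea is the one the paper supplies in Eqs.~(\ref{eq:idMAJ}) and (\ref{eq:idXOR}): explicit $S_5$-programs of length $8$ for the three-input $\mathsf{MAJ}$ and length $9$ for the three-input $\mathsf{XOR}$, each instruction reading a single input variable and using only $5$-cycles so that the conjugation/recoding step of Barrington's Lemma~1 applies with no length increase. With these, each full-adder layer multiplies the program length by at most $9$, and the total length is $9^{\log_{3/2}(m)+O(\log\log m)}=m^{\log_{3/2}9}\log(m)^{O(1)}=O(m^{5.42})$, since $\log_{3/2}9=5.4190\ldots$. In other words, the exponent is $\log_{3/2}9$, not $c\log_2 r$ for the generic gadget; without the hand-crafted length-$\le 9$ programs for the three-input gates your route proves a polynomial bound but not the stated one. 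Your observations that conjugation is free, that the adder must be (and is) a fan-out-one formula so that depth governs the blowup, and that prepending at most two constant leaves for $w_i$ is harmless, are all correct and consistent with the paper.
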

\begin{proof}
First, we describe a logarithmic-depth classical circuit that computes functions $f_k(y)$ for the range of applicable values $k$, and second, report expressions for $\mathsf{MAJ}$ and $\mathsf{XOR}$ in the form of a branching program that can be used in the recursion \cite[Proof of Theorem 1]{barrington1989bounded}.  The length of the branching program computing $f_k(y)$ is upper bounded by taking the maximal length of the program implementing $\mathsf{MAJ}$ or $\mathsf{XOR}$ to the power of the circuit depth.

First, construct a classical circuit with $\mathsf{MAJ}$ and $\mathsf{XOR}$ gates that implements $f_k(y)$.  To do so, we develop a circuit that computes all bits of the $W(y)$, and for the purpose of implementing a given single Boolean component, discard all gates that compute the bits we are not interested in.   
Such operation does not increase the depth of the circuit, and may, in fact, decrease it slightly.

To find $W(y)$, we employ a circuit consisting of two stages.  First, compose a circuit of depth $\log_{3/2}(m)+O(1)$ with 3-input 2-output Full Adder gates $\mathsf{FA}(a,b,c):=(\mathsf{MAJ}(a,b,c),\mathsf{XOR}(a,b,c))$ by grouping as many triples of digits of same significance at each step as possible 
(note that $\mathsf{MAJ}$ and $\mathsf{XOR}$ are implemented in parallel).  We finish this first stage when the output contains two $\log(m)$-digit integer numbers $u$ and $v$ such that $W=u\,{+}\,v$.  To analyze this circuit, it is convenient to group all bits needing to be added into the smallest set of integer numbers, and count the reduction in the number of integers left to be added by treating layers of $\mathsf{FA}$ gates as Carry-Save Adders \cite{avizienis1961signed, wegener1987complexity}.  A Carry-Save Adder is defined as the 3-integer into 2-integer adder, which is implemented by applying the Full Adders to the individual components of the three integer numbers at the input.  
Since the number of integers left to be added changes by a factor of $\frac{2}{3}$ at each step, and every step is implemented by a depth-$1$ $\mathsf{MAJ}/\mathsf{XOR}$ circuit, the depth of the first stage is $\log_{3/2}(m)\,{+}\,O(1)$.  To find the individual components of $W(y)$, the second stage adds two $\log(m)$-digit integer numbers $u$ and $v$.  This can be accomplished by any logarithmic-depth integer addition circuit in depth $O(\log \log(m))$, such as \cite{krapchenko1970asymptotic}.  
The total depth is thus $\log_{3/2}(m)\,{+}\,O(\log \log(m))$.


Next, construct $S_5$-programs computing the $\mathsf{MAJ}$ and $\mathsf{XOR}$ functions:
\begin{eqnarray}
\label{eq:idMAJ}
\langle z_1, (1,4,3,2,5), e \rangle \;\;
\langle z_2, (1,3,5,4,2), e \rangle \;\;
\langle z_3, (1,2,5,3,4), e \rangle \;\;
\langle z_1, (1,2,3,4,5), e \rangle \nonumber \\
\langle z_2, (1,2,4,5,3), e \rangle \;\;
\langle z_3, (1,4,3,5,2), e \rangle \;\;
\langle z_1, (1,5,4,3,2), e \rangle \;\;
\langle z_1, (1,5,2,3,4), e \rangle \nonumber \\
=
\left\{ \ba{rcl}
e &\mbox{if}& \mathsf{MAJ}(z_1,z_2,z_3)=0\\
(1,2,3,4,5)  &\mbox{if}&\mathsf{MAJ}(z_1,z_2,z_3)=1, \\
\ea\right.
\end{eqnarray}


\begin{eqnarray}
\label{eq:idXOR}
\langle z_2, (1,2,3,5,4), e \rangle \;\;
\langle z_3, (1,2,4,5,3), e \rangle \;\;
\langle z_2, (1,3,5,4,2), e \rangle \;\;
\langle z_3, (1,4,5,3,2), e \rangle \nonumber \\
\langle z_1, (1,2,3,4,5), e \rangle \;\;
\langle z_2, (1,3,4,2,5), e \rangle \;\;
\langle z_2, (1,3,2,4,5), e \rangle \;\;
\langle z_3, (1,3,4,2,5), e \rangle \nonumber \\
\langle z_3, (1,3,2,4,5), e \rangle \nonumber \\
=
\left\{ \ba{rcl}
e               & \mbox{if} & \mathsf{XOR}(z_1,z_2,z_3)=0\\
(1,2,3,4,5)         & \mbox{if} & \mathsf{XOR}(z_1,z_2,z_3)=1.\\
\ea\right.
\end{eqnarray}

The branching program that $(1,2,3,4,5)$-computes $f_k(y)$ is created by recursively replacing gates $\mathsf{MAJ}$ and $\mathsf{XOR}$ in the circuit constructed above with the branching programs \eq{idMAJ} and \eq{idXOR}, where each $z_i$ is either one of the primary input variables $y_1,y_2\ldots,y_m$ or one of the intermediate variables in the circuit computing $f_k(y)$, until all instructions are controlled by constants and primary variables $y_1,y_2\ldots,y_m$.
The recoding of branches of the program $\tau$-computing a desired intermediate variable $z_*$ when $\tau{\neq}(1,2,3,4,5)$ (note how \eq{idMAJ} and \eq{idXOR} $(1,2,3,4,5)$-compute the gates, but not $\tau$-compute them for arbitrary $\tau$) is accomplished in accordance with \cite[Lemma 1]{barrington1989bounded}.
The total length of the branching program is thus upper bounded by the size of longest branching program implementation of the basic gates used ($\mathsf{MAJ}$ and $\mathsf{XOR}$) raised to the power the depth of the circuit it encodes, 
$$9^{\log_{3/2}(m)+O(\log\log(m))} = O(m^{5.4190225...}\log(m)^{O(1)}) = O(m^{5.42}).$$
\end{proof}

We conclude this section by summarizing the main result in a Theorem.

\begin{theorem}
The $n$-bit $\mathbf{hwb}$ function can be implemented by an ancilla-free reversible circuit of size $O(n^{6.42})$.
\end{theorem}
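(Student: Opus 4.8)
The plan is to assemble, in order, the reductions established in this section while tracking the gate count at each step. First I would invoke \lem{hwbwithc} to write $\mathbf{hwb}$ as the product of $\lfloor\log(n)\rfloor+1$ layers of $C$-type gates displayed in \eq{hwblogC}. The leftmost factor $C(f_0;(0,1,\dots,n-1))$ is dispatched directly by \lem{C0} at a cost of $O(n)$ elementary gates. For each remaining layer ($k\geq 1$) I would use \lem{lemC} to rewrite its cycles — pairing them up when needed, which is possible since their number $g=\mathrm{GCD}(n,2^k)$ is a power of two — as products of length-$5$ cycles, i.e. as $C5(f_k;B)$ gates. As recorded after the proof of \lem{lemC}, this produces between $\tfrac{n}{4}+Const$ and $\tfrac{n}{2}+Const$ gates of type $C5$ per layer, so the entire circuit so far uses $O(n\log n)$ gates $C5(f_k;B)$ plus $O(n)$ elementary gates.

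Next I would expand a single $C5(f_k;B)$ gate into elementary gates. Using \eq{GBk} I factor it into the six pairwise-commuting gates $C5|_{M_i}(f_k;B)$, each controlled only by the symmetric function $f_{k,i}(y)$ of the $m=n-5$ bits lying outside $B$. \lem{S5} supplies a branching program that $(1,2,3,4,5)$-computes $f_{k,i}$ of size $O(n^{5.42})$. Every instruction $\langle y_i,\sigma_i,\tau_i\rangle$ of this program applies a permutation in $S_5$ to the five-bit register $B$, selected by the single control wire $y_i$ drawn from outside $B$; such a singly-controlled permutation of $\{0,1\}^5$ is an elementary reversible gate on $6=5+1$ wires and costs $O(1)$. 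Thus each $C5|_{M_i}(f_k;B)$, and hence each $C5(f_k;B)$ after the constant factor $6$, is realized by $O(n^{5.42})$ elementary gates with no ancillae.

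Finally I would combine the counts: $O(n\log n)$ gates $C5$, each realized by $O(n^{5.419\ldots}\,\mathrm{polylog}(n))$ elementary gates (the raw bound behind \lem{S5} before rounding), together with the $O(n)$ elementary gates from \lem{C0}. Multiplying yields $O(n^{6.419\ldots}\,\mathrm{polylog}(n))$; since $n^{\varepsilon}$ dominates every power of $\log n$ for each $\varepsilon>0$, the logarithmic factors are absorbed into the rounding of the exponent, giving the claimed $O(n^{6.42})$.

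The step I expect to carry the most weight is the passage from abstract $S_5$ branching-program instructions to genuinely ancilla-free elementary gates. The care lies in checking that mapping each instruction to a controlled permutation of the five wires of $B$ introduces no scratch register, that the control wire is a bona fide input bit outside $B$ (so that the six factors of \eq{GBk} may all act on the same register $B$ without conflict), and that the recoding of non-$(1,2,3,4,5)$ branches via Barrington's conjugation trick preserves these properties. Once the ancilla-free character is secured, the exponent bookkeeping is routine.
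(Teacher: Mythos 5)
Your proposal is correct and follows essentially the same route as the paper's proof: the same chain of reductions (Lemma~\ref{lem:hwbwithc} and Lemma~\ref{lem:C0} plus Lemma~\ref{lem:lemC} to get $O(n\log n)$ $C5$ gates, Eq.~(\ref{eq:GBk}) to split each into six $C5|_{M_i}$ gates, Lemma~\ref{lem:S5} for the branching program, and constant-cost reversible realization of each singly-controlled $S_5$ instruction on $6$ wires), with the same final multiplication of $O(n\log n)$ by $O(n^{5.419\ldots}\,\mathrm{polylog}(n))$. The only difference is presentational (top-down versus the paper's bottom-up assembly), and your explicit attention to the ancilla-free mapping of instructions matches the paper's appeal to a reversible synthesis algorithm for each instruction.
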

\begin{proof}
First, implement each instruction $\langle z_*, (a_1,a_2,a_3,a_4,a_5), e \rangle$ where $z_*$ is either a primary variable or a constant and the sets $\{a_1,a_2,a_3,a_4,a_5\}$ are defined per \eq{Ms}, using constantly many basic reversible gates.  This can be accomplished by employing a reversible logic synthesis algorithm, e.g., \cite{saeedi2010reversible}.  Next, use \lem{S5} with $m=n{-}5$ and $x = y\,{\sqcup}\,B$ to implement all necessary $C5|_{M_i}(f_k(y);B)$ gates, using a branching program with 
$$O\left(9^{\log_{3/2}(n-5)\,+\,O(\log\log(n-5))}\right)=O\left(9^{\log_{3/2}(n)\,+\,O(\log\log(n))}\right)$$
instructions.  Each such branching program requires $O(9^{\log_{3/2}(n)+O(\log\log(n))})$ basic reversible gates since every instruction requires constantly many basic reversible gates.  Use six \linebreak $C5|_{M_i}(f_k(y);B)$ gates to implement one $C5(f_k(x);B)$ gate, using \eq{GBk}.  Each $C5(f_k(x);B)$ thus costs
$O(9^{\log_{3/2}(n)+O(\log\log(n))})$ basic reversible gates.  Combine \lem{hwbwithc}, \lem{C0}, and \lem{lemC} to implement $\mathbf{hwb}$ using $O(n \log(n))$ $C5(f_k(x);B)$ gates, implying the total basic reversible gate count of 

$$O\left(9^{\log_{3/2}(n)\,+\,O(\log\log(n))}\cdot n \log(n)\right) = O\left(n^{6.4190225...}\log(n)^{O(1)}\right) = O(n^{6.42}).$$
\end{proof}
\section{Ancilla-free quantum circuit of size $O(n^2)$} 
\label{sec:quantumnsquared}

Consider a register of $n$ qubits and let $\mathsf{C}$ be the cyclic shift operator,
\[
\mathsf{C}|x_1,x_2,\ldots,x_{n-1},x_n\ra=|x_2, x_3, \ldots,x_n, x_1\ra.
\]
The hidden weighted bit function $U_{\mathbf{hwb}}$ may be written as
\be
\label{U}
U_{\mathbf{hwb}}|x\ra = \mathsf{C}^{x_1+x_2+\ldots+x_n}|x\ra \quad \mbox{for all $x\in \{0,1\}^n$}.
\ee
In other words, $U_{\mathbf{hwb}}$ implements the $k$-th power of $\mathsf{C}$ on the subspace with the Hamming weight $k$. Here we show that $U_{\mathbf{hwb}}$ can be implemented by an ancilla-free quantum circuit of the size $O(n^2)$.  The circuit is expressed using Clifford gates and single-qubit $Z$-rotations.

Let 
\be
\label{W}
W = \sum_{j=0}^{n-1}  |1\ra\la 1|_j
\ee
be the Hamming weight operator. Our starting point is 

\begin{lemma}
\label{lem:1}
Suppose $\mathsf{C}=e^{iH}$ for some $n$-qubit Hamiltonian $H$
that commutes with $W$. Then 
\be
\label{eq1}
U_{\mathbf{hwb}}=e^{iHW}.
\ee
\end{lemma}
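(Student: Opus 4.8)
The plan is to verify the operator identity \eqref{eq1} by evaluating both sides on an arbitrary computational basis vector $|x\ra$, $x\in\{0,1\}^n$, since these span the full Hilbert space. The single property of $W$ that matters is that each $|x\ra$ is an eigenvector: from \eqref{W} we have $W|x\ra = |x|\,|x\ra$, where $|x|:=x_1+x_2+\ldots+x_n$ is the Hamming weight. Thus the whole computation reduces to turning the operator $W$ sitting inside the exponential $e^{iHW}$ into the scalar $|x|$, and this is precisely where the hypothesis $[H,W]=0$ enters.

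First I would establish, by induction on $m$, the identity
\[
(HW)^m |x\ra = |x|^m\, H^m |x\ra, \qquad m\ge 0.
\]
The base case $m=0$ is immediate. For the inductive step, $[H,W]=0$ gives $[H^m,W]=0$, so $W$ commutes past $H^m$ and I can write $(HW)^{m+1}|x\ra = HW\bigl(|x|^m H^m|x\ra\bigr) = |x|^m\, H\,H^m\,W|x\ra = |x|^{m+1} H^{m+1}|x\ra$, using $W|x\ra=|x|\,|x\ra$ at the last step.

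With this in hand, I would expand the matrix exponential as a power series and substitute term by term:
\[
e^{iHW}|x\ra = \sum_{m\ge 0} \frac{i^m}{m!}(HW)^m|x\ra = \sum_{m\ge 0}\frac{(i|x|)^m}{m!}H^m|x\ra = e^{i|x|H}|x\ra = \bigl(e^{iH}\bigr)^{|x|}|x\ra = \mathsf{C}^{|x|}|x\ra.
\]
By the defining relation \eqref{U}, the right-hand side is exactly $U_{\mathbf{hwb}}|x\ra$. Since $|x\ra$ was arbitrary and the computational basis spans the space, this yields $e^{iHW}=U_{\mathbf{hwb}}$.

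I do not anticipate a genuine obstacle; the only point requiring care is the commutation bookkeeping in the induction, and it is essential rather than cosmetic — without $[H,W]=0$ one cannot pull $W$ out as its eigenvalue inside the exponential, and the identity $e^{iHW}|x\ra=e^{i|x|H}|x\ra$ fails in general. An equivalent, more structural route would instead decompose the space into the Hamming-weight eigenspaces $\calH_k$ (on which $W=kI$); since $[H,W]=0$ each $\calH_k$ is $H$-invariant, so $e^{iHW}|_{\calH_k}=e^{ikH}|_{\calH_k}=(e^{iH})^k|_{\calH_k}=\mathsf{C}^k|_{\calH_k}$, again matching $U_{\mathbf{hwb}}$ via \eqref{U}. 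I would present the basis-vector computation as the primary argument, since it is fully self-contained and does not require separately invoking invariance of the weight subspaces.
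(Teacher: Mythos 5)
Your proof is correct. Your primary argument — expanding $e^{iHW}$ as a power series and using the induction $(HW)^m|x\ra=|x|^mH^m|x\ra$ (which relies on $[H^m,W]=0$ and $W|x\ra=|x|\,|x\ra$) — is a more computational, term-by-term version of what the paper does; the induction is sound and the series manipulations are unproblematic in finite dimensions. The paper instead argues structurally: it decomposes the Hilbert space into the weight subspaces $\calL_k$, observes that $[H,W]=0$ makes $H$ (hence $HW$ and $e^{iHW}$) block-diagonal, and notes that $HW$ restricted to $\calL_k$ equals $kH$ restricted to $\calL_k$, so $e^{iHW}|_{\calL_k}=e^{ikH}|_{\calL_k}=\mathsf{C}^k|_{\calL_k}=U_{\mathbf{hwb}}|_{\calL_k}$. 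This is exactly the ``equivalent, more structural route'' you sketch in your final paragraph, so the two proofs carry identical mathematical content; the paper's version has the mild advantage of making the block-diagonal structure explicit (which is reused later, e.g.\ in establishing $\mathsf{F}W=W\mathsf{F}$ and Eq.~(\ref{Ucircuit})), while yours is more self-contained and avoids any appeal to invariant subspaces. Either presentation would serve.
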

\begin{proof}
Indeed, let $\calL_k$ be the subspace spanned by all basis states $|x\ra$ with the Hamming weight $k$. The full Hilbert space of $n$ qubits is the direct sum $\calL_0\oplus\calL_1\oplus \ldots\oplus\calL_n$.  Let us say that an operator $O$ is block-diagonal if $O$ maps each subspace $\calL_k$ into itself.  Since $H$ commutes with $W$, we infer that $H$ is block-diagonal.  Therefore $HW$ and $e^{iHW}$ are also block-diagonal.  Note that $HW$ and $kW$ have the same restriction onto $\calL_k$.  Thus $e^{iHW}$ and $e^{ikH}$ have the same restriction onto $\calL_k$.  By assumption, $e^{iH}=\mathsf{C}$. Thus $e^{iHW}$ and $\mathsf{C}^k$ have the same restriction onto $\calL_k$.  Likewise, $U_{\mathbf{hwb}}$ is block-diagonal and the restriction of $U_{\mathbf{hwb}}$ onto $\calL_k$  is $\mathsf{C}^k$. We conclude that $U_{\mathbf{hwb}}$ and $e^{iHW}$ have the same restriction onto $\calL_k$ for all $k$.  Since both operators are block-diagonal, one has $U_{\mathbf{hwb}}=e^{iHW}$.
\end{proof}
 
We will construct a Hamiltonian $H$ satisfying conditions of \lem{1} using the language of fermions and the fermionic Fourier transform~\cite{babbush2017low,kivlichan2018quantum}.
First, define  fermionic creation and annihilation operators $a_p^\dag$ and $a_p$ with $p\in \ZZ_n\equiv \{0,1,\ldots,n-1\}$ as
\[
a_p^\dag = \underbrace{Z\otimes Z \otimes \cdots \otimes Z}_{p} 
\otimes  |1\ra\la 0|  \otimes \underbrace{I \otimes I \otimes \cdots \otimes I}_{n-p-1}
\]
\[
a_p = \underbrace{Z\otimes Z \otimes \cdots \otimes Z}_{p} 
\otimes  |0\ra\la 1|  \otimes \underbrace{I \otimes I \otimes \cdots \otimes I}_{n-p-1}.
\]
Here $Z=|0\ra\la 0|-|1\ra\la 1|$ is the Pauli-$Z$ operator.
\begin{dfn}
A Fermionic Fourier Transform is a unitary $n$-qubit operator $\mathsf{F}$ such that $\mathsf{F}|0^n\ra =|0^n\ra$ and
\be
\label{FFT}
\mathsf{F} a_p \mathsf{F}^\dag = \frac1{\sqrt{n}} \sum_{q\in \ZZ_n} e^{2\pi i pq/n} a_q
\qquad \mbox{for all $p\in \ZZ_n$}.
\ee
\end{dfn}
Note that Eq.~(\ref{FFT}) uniquely specifies $\mathsf{F}$.  Indeed, suppose $x\in \{0,1\}^n$ is a weight-$k$ basis state with ones at qubits $p_1<p_2<\ldots<p_k$. 
Then
\be
\label{Fx}
\mathsf{F}|x\ra 
= \mathsf{F} a_{p_1}^\dag a_{p_2}^\dag \cdots a_{p_k}^\dag |0^n\ra = \mathsf{F} a_{p_1}^\dag \mathsf{F} a_{p_2}^\dag \cdots a_{p_k}^\dag  \mathsf{F} ^\dag |0^n\ra 
= \prod_{i=1}^k  \mathsf{F} a_{p_i}^\dag \mathsf{F}^\dag |0^n\ra.
\ee
Since each operator  $\mathsf{F} a_{p_i}^\dag \mathsf{F}^\dag=(\mathsf{F} a_{p_i} \mathsf{F}^\dag)^\dag$ is determined by Eq.~(\ref{FFT}), this uniquely specifies the action of $\mathsf{F}$ on the basis vectors $|x\ra$.  It will be important that $\mathsf{F}$ commutes with the Hamming weight operator $W$,
\be
\label{FW}
\mathsf{F}W=W\mathsf{F}.
\ee
Indeed, from Eqs.~(\ref{FFT},\ref{Fx}) one can see that $\mathsf{F}|x\ra$ is a linear combination of states $a_{q_1}^\dag a_{q_2}^\dag \cdots a_{q_k}^\dag |0^n\ra$. Since $(a_q^\dag)^2=0$, the state $a_{q_1}^\dag a_{q_2}^\dag \cdots a_{q_k}^\dag |0^n\ra$ is non-zero only if all indices $q_1,q_2, \ldots,q_k$ are distinct.  Such state has weight $k$.  Thus $\mathsf{F}$ maps weight-$k$ states to linear combinations of weight-$k$ states proving Eq.~(\ref{FW}).

We will use the following fact established by Kivlichan et al.~\cite{kivlichan2018quantum}.
\begin{lemma}
\label{lem:FFT}
The fermionic Fourier transform  $\mathsf{F}$ on $n$ qubits can be implemented by a quantum circuit of size $O(n^2)$. The circuit requires no ancillary qubits.
\end{lemma}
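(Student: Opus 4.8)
The plan is to recognize $\mathsf{F}$ as a particle-number-conserving fermionic Gaussian operation (a matchgate circuit) and to synthesize it from nearest-neighbor two-qubit gates by decomposing the underlying single-particle transformation into Givens rotations. The first step is to note that the right-hand side of \eqref{FFT} is a linear combination of annihilation operators only, with no creation operators appearing; equivalently, $\mathsf{F}$ conjugates the mode operators without mixing $a_q$ and $a_q^\dag$, so it is number-conserving (consistent with \eqref{FW}) and is completely determined by the single-particle unitary $u\in U(n)$ with entries $u_{pq}=n^{-1/2}e^{2\pi i pq/n}$, the ordinary discrete Fourier matrix. A standard fact about free-fermion evolutions is that every number-conserving Gaussian unitary is fixed, up to the vacuum phase, by such an $n\times n$ matrix, and that composition of Gaussian unitaries corresponds to multiplication of these matrices.

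Next I would decompose $u$ into a product of $O(n^2)$ two-level Givens rotations acting on adjacent indices. Any $u\in U(n)$ admits such a decomposition (Reck-type or the rectangular Clements-type scheme), using $\binom{n}{2}=O(n^2)$ rotations arranged in a brick-wall pattern of depth $O(n)$; if one prefers to keep all rotations strictly nearest-neighbor, any long-range rotation can instead be realized by routing modes together with fermionic swaps, each itself a single adjacent two-qubit gate, which does not change the $O(n^2)$ count. The key point is then to lift each adjacent-mode Givens rotation to a genuine two-qubit gate. This works precisely because the Jordan--Wigner $Z$-strings cancel for neighboring modes: a short computation gives $a_p^\dag a_{p+1}=\ket{1}\bra{0}_p\otimes\ket{0}\bra{1}_{p+1}$, a local operator supported on qubits $p$ and $p+1$, and likewise the number terms $a_p^\dag a_p$ are single-qubit. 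Hence the Hermitian generator of a rotation in the $(p,p+1)$ mode block is a two-qubit Hamiltonian, and its exponential is a two-qubit gate on qubits $p$ and $p+1$.

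Finally I would fix the overall phase so that $\mathsf{F}\ket{0^n}=\ket{0^n}$: a number-conserving Gaussian unitary automatically preserves the fermionic vacuum up to a global phase, and absorbing that phase (e.g.\ into a final diagonal rotation, or simply by the normalization convention of the Definition) yields exactly the operator specified by \eqref{FFT}. Assembling the $O(n^2)$ two-qubit gates obtained above gives a circuit acting only on the $n$ physical qubits, with no ancillae, establishing the claim. I expect the main obstacle to be the non-locality of the Jordan--Wigner strings for non-adjacent modes: a naive realization of a Givens rotation between distant modes $p$ and $q$ would carry a Pauli string on all intervening qubits, so the whole argument hinges on organizing the decomposition so that only nearest-neighbor rotations (and fermionic swaps) occur, which is what keeps every elementary operation a local two-qubit gate and the total count $O(n^2)$.
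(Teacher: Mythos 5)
Your proposal is correct and follows essentially the same route as the paper's Appendix~A: there the single-particle matrix $f_{p,q}=n^{-1/2}e^{2\pi i pq/n}$ is reduced to the identity column by column using adjacent-index Givens rotations (plus occasional fermionic swaps and single-mode phases), each lifted to a nearest-neighbor two-qubit gate $\mathsf{R}_{p,p+1}(\alpha,\beta)$ via the cancellation of Jordan--Wigner strings, for $O(n^2)$ gates total, with the vacuum condition $\mathsf{F}|0^n\rangle=|0^n\rangle$ maintained because every gate fixes $|0^n\rangle$. No gaps to report.
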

For completeness, we provide a simplified proof of \lem{FFT} and an explicit construction of the quantum circuit realizing $\mathsf{F}$ in Appendix~A. Now we are ready to define a Hamiltonian $H$ satisfying conditions of \lem{1}.
Let 
\[
E = \frac12(I + Z^{\otimes n})
\]
be the projector onto the even-weight subspace.
Define $n$-qubit Hamiltonians
\be
\label{H0}
H_0 =  \frac{2\pi}{n} \sum_{p\in \ZZ_n} p |1\ra\la 1|_p,
\qquad H' = H_0+  \frac{\pi}{n}  WE,
\ee
and
\be
\label{Hdefinition}
H = V^\dag H' V \quad \mbox{where} \quad  V =  \mathsf{F}^\dag e^{i H_0 E/2}.
\ee
\begin{lemma}
\label{lem:2}
The Hamiltonian $H$ defined in Eq.~(\ref{Hdefinition})  satisfies $C=e^{iH}$.
\end{lemma}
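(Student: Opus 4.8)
The plan is to prove the operator identity $\mathsf{C}=e^{iH}$ directly, using that $e^{iH}=V^\dag e^{iH'}V$ by functional calculus (since $H=V^\dag H'V$). First I would note that both $\mathsf{C}$ and $e^{iH}$ commute with $W$, hence are block-diagonal with respect to $\calL_0\oplus\cdots\oplus\calL_n$: for $e^{iH}$ this holds because $H'$ is diagonal in the computational basis while $\mathsf{F}$ commutes with $W$ by \eqref{FW} and $e^{iH_0E/2}$ is diagonal, so $V$ commutes with $W$. Rather than compare the two block-diagonal operators sector by sector, I would exploit the same device the paper already used to pin down $\mathsf{F}$ in \eqref{Fx}: a unitary $U$ is fixed by $U\ket{0^n}$ together with the conjugations $U a_p^\dag U^\dag$, because any basis state factors as $\ket{x}=a_{p_1}^\dag\cdots a_{p_k}^\dag\ket{0^n}$ and therefore $U\ket{x}=\prod_i\bigl(U a_{p_i}^\dag U^\dag\bigr)\,U\ket{0^n}$. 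So it suffices to check that $\mathsf{C}$ and $e^{iH}$ agree on the vacuum and conjugate every generator $a_p^\dag$ identically.

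The vacuum check is immediate. Since $E\ket{0^n}=\ket{0^n}$ and $H_0\ket{0^n}=0$ we have $e^{iH_0E/2}\ket{0^n}=\ket{0^n}$, and $\mathsf{F}\ket{0^n}=\ket{0^n}$, so $V\ket{0^n}=\ket{0^n}$; as $H'\ket{0^n}=0$ this gives $e^{iH}\ket{0^n}=\ket{0^n}=\mathsf{C}\ket{0^n}$.

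The heart of the argument is the conjugation $e^{iH}a_p^\dag e^{-iH}=V^\dag e^{iH'}\,(V a_p^\dag V^\dag)\,e^{-iH'}V$, which I would evaluate layer by layer. The Fourier relation \eqref{FFT} gives $\mathsf{F} a_p^\dag \mathsf{F}^\dag=\tfrac1{\sqrt n}\sum_q e^{-2\pi i pq/n}a_q^\dag$, and from $H_0$ in \eqref{H0} (with $|1\ra\la1|_p=a_p^\dag a_p$) one gets $e^{iH_0}a_q^\dag e^{-iH_0}=e^{2\pi i q/n}a_q^\dag$. Combining these shows that the $H_0$-part of the evolution, conjugated by $\mathsf{F}$, shifts the mode index of the creation operators by one (the orientation fixed by the sign conventions of \eqref{FFT} and \eqref{H0}); this is the ``periodic'' fermionic cyclic shift. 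In parallel I would compute $\mathsf{C}a_p^\dag \mathsf{C}^\dag$ straight from the Jordan--Wigner definition of $a_p^\dag$: conjugating the string $Z^{\otimes p}\otimes\ket{1}\bra{0}\otimes I^{\otimes(n-p-1)}$ by the qubit shift yields the neighbouring creation operator multiplied by a boundary factor built from $Z_{n-1}$ and, at the wrap-around index, the global parity $Z^{\otimes n}=(-1)^W$. The remaining task is to show that the correction factors in $V$ and $H'$ supply exactly this boundary factor.

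The main obstacle is the parity bookkeeping at the boundary. Because $a_p^\dag$ changes the Hamming weight by one, it does not commute with the even-weight projector $E$; instead it intertwines $E$ with its complement, $E\,a_p^\dag=a_p^\dag(I-E)$. Consequently the conjugation of $a_p^\dag$ by the twist $e^{iH_0E/2}$ is not a mere phase: the projector gets flipped, so the innermost and outermost copies of $e^{\pm iH_0E/2}$ in $e^{iH}a_p^\dag e^{-iH}$ do not cancel but leave a residual parity-dependent phase, and the diagonal term $e^{i\pi WE/n}$ produced by the $\tfrac{\pi}{n}WE$ summand of $H'$ is tuned to absorb precisely this residue. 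I would carry this out by splitting the verification into the even-weight and odd-weight sectors (where $E$ acts as $1$ and $0$), checking in each sector that the accumulated phase reproduces the $Z_{n-1}$ and $(-1)^W$ boundary factors found above for $\mathsf{C}a_p^\dag \mathsf{C}^\dag$, and treating the wrap-around generator separately from the bulk ones. Once every $a_p^\dag$ is matched, the two unitaries agree on the spanning family $\{\ket{x}\}$, and the identity $\mathsf{C}=e^{iH}$ follows.
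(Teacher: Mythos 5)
Your proposal is correct and follows essentially the same route as the paper's Appendix~B: both arguments pin down the relevant unitaries by their action on $|0^n\ra$ together with how they conjugate the fermionic creation/annihilation operators, use Eq.~(\ref{FFT}) and $e^{iH_0}a_q^\dag e^{-iH_0}=e^{2\pi i q/n}a_q^\dag$ to identify $\mathsf{F}e^{iH_0}\mathsf{F}^\dag$ with the fermionic cyclic shift, and account for the qubit-versus-fermion boundary sign and the even/odd-weight twist exactly as you describe. The only difference is bookkeeping: the paper first splits $\mathsf{C}=E\,\mathsf{fC}Z_{n-1}+(I-E)\,\mathsf{fC}$ and proves two projector-free operator identities sector by sector, thereby never conjugating $a_p^\dag$ through the $E$-dependent exponentials, whereas you track the flip $Ea_p^\dag=a_p^\dag(I-E)$ inside the conjugation --- both work.
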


A proof of this lemma is given in Appendix~B. A high-level intuition behind the definition of $H$ comes from the fact that $\mathsf{F} H_0 \mathsf{F}^\dag$ is the fermionic momentum operator.  Note that $H=\mathsf{F} H_0 \mathsf{F}^\dag$ in the odd-weight subspace where $E\,{=}\,0$.  The extra terms in the definition of $H$ are needed to change integer momentums (periodic boundary conditions) in the odd-weight subspace to half-integer momentums (anti-periodic boundary conditions) in the even-weight subspace. This accounts for the difference between the qubit cyclic shift and its fermionic analogue, as detailed in Appendix~B.

From Eq.~(\ref{FW}) one can see that $HW\,{=}\,WH$. Thus $H$ satisfies conditions of \lem{1}. Combining \lem{1}, \lem{2}, and noting that $VW\,{=}\,WV$ one arrives at
\be
\label{Ucircuit}
U_{\mathbf{hwb}} = e^{i HW} 
= e^{i V^\dag H' V W} = V^\dag e^{i H'  W} V
=e^{-iH_0 E/2}  \mathsf{F} e^{i H'  W} \mathsf{F}^\dag e^{iH_0 E/2}.
\ee
Here we used the well-known fact that $e^{iV^\dag OV}=V^\dag e^{iO} V$ for any Hermitian operator $O$ and any unitary $V$ (which can be verified by expanding the exponent using the Taylor series and noting that $(V^\dag OV)^p=V^\dag O^p V$ for all $p\ge 1$). We claim that each term in Eq.~(\ref{Ucircuit}) can be implemented using $O(n^2)$ two-qubit gates without ancillary qubits.  By \lem{FFT}, the layers $\mathsf{F}$ and $\mathsf{F}^\dag$ have gate cost $O(n^2)$. 

For the term $e^{iH_0E/2}$ and its inverse, we have the following lemma.
\begin{lemma}
\label{lem:H_0E}
The operator $e^{iH_0E/2}$ can be implemented by a quantum circuit of size $O(n)$ without using ancillary qubits.
\end{lemma}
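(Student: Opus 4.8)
The plan is to exploit that $H_0E/2$ is diagonal in the computational basis and to localize it by a single Clifford conjugation. Writing $N_p := |1\ra\la 1|_p = (I-Z_p)/2$, the definitions $H_0 = \frac{2\pi}{n}\sum_{p} p\,N_p$ and $E = \frac12(I + Z^{\otimes n})$ give $H_0 E/2 = \frac{\pi}{2n}\sum_{p\in\ZZ_n} p\,N_p (I + Z^{\otimes n})$. A naive Pauli expansion writes this as a sum of $n$ mutually commuting terms, but each carries a weight-$(n{-}1)$ operator $\bar Z_p := \prod_{q\neq p}Z_q = Z^{\otimes n}Z_p$, so exponentiating term by term would cost $\Theta(n^2)$ gates. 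The key idea is to first conjugate by a parity circuit that collapses $Z^{\otimes n}$ to a single-qubit $Z$, which turns every surviving term into a two-qubit interaction.

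Concretely, I would take $V := \prod_{j=1}^{n-1}\cnotgate(j\to 0)$, the product of $n{-}1$ CNOTs each controlled on qubit $j$ and targeting qubit $0$. These CNOTs commute and $V$ is an involution. A short computation in the $Z$-basis shows $V Z^{\otimes n} V^\dag = Z_0$, while $V N_p V^\dag = N_p$ for every $p\geq 1$ (since $V$ leaves the value of each qubit $p\geq 1$ unchanged and $N_p$ is diagonal on the control of each CNOT). Conjugating,
\[
V\,(H_0E/2)\,V^\dag = \frac{\pi}{2n}\sum_{p=0}^{n-1} p\,(V N_p V^\dag)(I + Z_0) = \frac{\pi}{2n}\sum_{p=1}^{n-1} p\, N_p(I + Z_0),
\]
where the crucial simplification is that the $p=0$ term drops out because of its vanishing coefficient, so the only troublesome transformed operator, $VN_0V^\dag = (I - Z^{\otimes n})/2$, never appears.

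Each summand $\frac{\pi}{2n}p\,N_p(I+Z_0)$ now acts only on the two qubits $\{0,p\}$ and all summands commute, so
\[
e^{i V(H_0E/2)V^\dag} = \prod_{p=1}^{n-1}\exp\!\Big(i\tfrac{\pi}{2n}p\,N_p(I+Z_0)\Big)
\]
is a product of $n{-}1$ diagonal two-qubit gates, each a single-qubit $Z$-rotation on qubit $p$ conditioned on $x_0{=}0$, realizable with $O(1)$ Clifford and $Z$-rotation gates. Using $e^{iA} = V^\dag e^{iVAV^\dag}V$ with $A = H_0E/2$ then yields $e^{iH_0E/2} = V^\dag\big(\prod_{p=1}^{n-1}e^{i\frac{\pi}{2n}p\,N_p(I+Z_0)}\big)V$; since $V$ and $V^\dag$ each use $n{-}1$ CNOTs and the middle layer uses $O(n)$ gates, the total is $O(n)$ and ancilla-free. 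The main obstacle to overcome is precisely the $\Theta(n^2)$ cost of direct exponentiation, and the whole argument hinges on choosing the conjugating circuit so that $Z^{\otimes n}$ becomes a single-qubit operator while the highest-weight term is annihilated by its zero coefficient.
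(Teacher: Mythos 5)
Your proof is correct. It reaches the same $O(n)$ count as the paper but organizes the argument differently: the paper factors $e^{iH_0E/2}=R_1\cdots R_{n-1}$ with $R_p=C_p^\dag e^{i\theta_p|11\ra\la11|_{0p}}C_p$, where each $C_p=\prod_{j\in\ZZ_n\setminus\{0,p\}}\cnotgate_{j,0}$ is its own parity circuit, and then recovers linear cost from the telescoping cancellation $C_pC_q^\dag=\cnotgate_{p,0}\cnotgate_{q,0}$ between consecutive factors. You instead conjugate the \emph{Hamiltonian} once by the fixed circuit $V=\prod_{j\ge1}\cnotgate_{j,0}$, which collapses $Z^{\otimes n}$ to $Z_0$ and renders every surviving summand two-local in one shot; the only term whose image would remain global is $p=0$, and it is killed by its zero coefficient (the paper uses the same fact, just phrased as $\theta_0=0$ so that $R_0$ is omitted). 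What your version buys is the elimination of the cancellation bookkeeping — a single conjugation sandwich $V^\dag(\cdots)V$ with $n-1$ commuting controlled-phase gates in the middle is manifestly $O(n)$. What the paper's per-factor conjugation buys is a template that it reuses verbatim for the harder term $e^{i(\pi/n)W^2E}$ in the following lemma, where the circuits $C_{pp'}$ genuinely vary with the target pair and the cancellation argument carries the weight (though one can check that your single-conjugation frame would also localize those terms to three qubits). Both proofs are sound; yours is the cleaner derivation of this particular lemma.
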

\begin{proof}
If we set $\theta_p=p\pi/n$, then
\be
\label{eq:exponentiate_H_0E}
e^{iH_0 E/2}=R_1R_2\cdots R_{n-1},\quad \mbox{where }
R_p = e^{i \theta_p |1\ra\la 1|_p E}.
\ee
The operator $|1\ra\la 1|_p E$ projects the subset of qubits $\ZZ_n{\setminus} \{p\}$ onto the odd-weight subspace. Note $p{\neq}0$ and let $C_p$ be a CNOT circuit that computes the parity of $\ZZ_n{\setminus} \{p\}$ into the qubit $0$,
\[
C_p = \prod_{j\in \ZZ_n\setminus\{0,p\}} \cnotgate_{j,0}.
\]
Then
$|1\ra\la 1|_p E = C_p^\dag |11\ra\la 11|_{0p} C_p$ and thus 
\[
R_p = C_p^\dag e^{i \theta_p |11\ra\la 11|_{0p}} C_p.
\]
Therefore, an individual $R_p$ is implemented with $O(n)$ gates, which suggests $e^{iH_0E/2}$ can be implemented with $O(n^2)$ gates. However, we can improve this count by noting that for $p{\neq} q$
\be
C_pC^\dag_q=\cnotgate_{p,0}\cnotgate_{q,0}.
\ee
Thus, in fact, the product in Eq.~\eqref{eq:exponentiate_H_0E} can be implemented with just $O(n)$ gates.
\end{proof}

We still need to implement the term $e^{iH' W} = e^{iH_0W} e^{i(\pi/n) W^2 E}$.  The operator $e^{iH_0W}$ is a product of $O(n^2)$ rotations $e^{i\theta|11\ra\la 11|}$ and $e^{i\theta |1\ra\la 1|}$.  Although a naive implementation of $e^{i(\pi/n) W^2 E}$ requires $O(n^3)$ gates, we next show that a better implementation exists.

\begin{lemma}
\label{lemma:3}
The operator $e^{i(\pi/n) W^2 E}$ can be implemented by a quantum circuit of size $O(n^2)$ without using ancillary qubits.
\end{lemma}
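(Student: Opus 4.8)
The plan is to peel the projector $E$ into a product of few-qubit phase gadgets and then reuse the parity-network idea already employed in \lem{H_0E}. Since $W$, $E$ and all $|1\ra\la 1|_j$ are diagonal and pairwise commuting, I would first expand $W^2 = W + 2\sum_{0\le j<k\le n-1}|1\ra\la 1|_j\,|1\ra\la 1|_k$ (using $|1\ra\la 1|_j^2 = |1\ra\la 1|_j$) to write
\[
e^{i(\pi/n)W^2 E} = \Big(\prod_{j=0}^{n-1} e^{i(\pi/n)\,|1\ra\la 1|_j\,E}\Big)\Big(\prod_{0\le j<k\le n-1} e^{i(2\pi/n)\,|1\ra\la 1|_j\,|1\ra\la 1|_k\,E}\Big),
\]
where every factor commutes with every other because all of them are diagonal in the computational basis.

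The linear factors are essentially free. Each $|1\ra\la 1|_j\,E$ projects onto the subspace where $x_j=1$ and the remaining $n-1$ qubits carry odd parity, so $e^{i(\pi/n)|1\ra\la 1|_j E}$ is exactly the gate $R_p$ of \lem{H_0E} up to the value of the angle. I would therefore apply the telescoping-$\cnotgate$ construction of that lemma (writing the parity to a data qubit other than $j$), which realizes the whole product $\prod_{j} e^{i(\pi/n)|1\ra\la 1|_j E}$ with $O(n)$ gates.

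The quadratic factors are the crux, and this is where the claimed $O(n^2)$ (rather than the naive $O(n^3)$) must be earned. The factor indexed by a pair $(j,k)$ applies a phase precisely when $x_j=x_k=1$ and the parity of the other $n-2$ qubits is even. A single such gadget can be built by choosing a spare data qubit $q\notin\{j,k\}$, computing the parity of $\ZZ_n\setminus\{j,k\}$ into $q$ by a $\cnotgate$ cascade, applying a three-qubit controlled phase rotation conditioned on $(x_j,x_k,\text{parity}_q)=(1,1,0)$ at $O(1)$ cost, and then uncomputing the cascade. Done independently this costs $O(n)$ per pair and $O(n^3)$ overall; to save a factor of $n$, I would share the parity register across the $\binom{n}{2}$ gadgets. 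Ordering the pairs along a walk in the Johnson graph $J(n,2)$, so that consecutive pairs differ in exactly one index, the accumulated parity only has to be updated by toggling one qubit out of and one qubit into the register between successive gadgets, i.e.\ by $O(1)$ $\cnotgate$s per step. This yields $O(n^2)$ $\cnotgate$s and $O(n^2)$ controlled rotations; together with an $O(n)$ initialization and a final uncomputation, the quadratic part costs $O(n^2)$, and hence so does $e^{i(\pi/n)W^2E}$.

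The hard part is making this shared parity network rigorous without ancillae. One must track the accumulator, which is an ordinary data qubit whose own bit is part of the parity it stores, correctly insert and remove indices as the active pair changes---including the bookkeeping needed when the accumulator qubit itself has to move because it would otherwise collide with the next $j$ or $k$---and verify that the cascade is uncomputed cleanly so that the net operator is exactly the intended diagonal phase. This is the direct, pair-indexed generalization of the identity $C_pC_q^\dag=\cnotgate_{p,0}\cnotgate_{q,0}$ that powered \lem{H_0E}; the difficulty is purely in the careful ordering and cancellation of $\cnotgate$ layers, not in any new idea.
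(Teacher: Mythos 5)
Your proposal is correct and follows essentially the same route as the paper: expand $W^2E$ into commuting linear and quadratic diagonal terms, implement each quadratic term by computing the parity of the complementary qubits into a data qubit and applying a three-qubit phase, and recover $O(n^2)$ by ordering the pairs so that consecutive ones share an index and the parity network telescopes. The ``hard part'' you flag---the accumulator colliding with the active pair---is resolved in the paper by permanently reserving qubit $0$ as the parity target and splitting off the $n-1$ pairs that contain qubit $0$ into a separate product handled by the cheaper $O(n)$ argument of \lem{H_0E}, after which the boustrophedon ordering $U_{1\uparrow}U_{2\downarrow}U_{3\uparrow}\cdots$ over pairs $0<p<p'$ gives exactly the Johnson-graph walk you describe.
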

\begin{proof}
First, note that
\be
\label{eq:expand_W^2E}
W^2E=2\sum_{\substack{p,p'\in\ZZ_n\\ 0<p<p'}}|11\ra\la11|_{pp'}E+2\sum_{\substack{p\in\ZZ_n\\ 0<p}}|11\ra\la11|_{0p}E+\sum_{p\in\ZZ_n}|1\ra\la1|_pE.
\ee
The terms in Eq.~\eqref{eq:expand_W^2E} commute. Therefore, we have, with arbitrary order within the products,
\be
\label{eq:exponentiate_W^2E}
e^{i(\pi/n) W^2 E}=\prod_{\substack{p,p'
\in\ZZ_n \\0<p<p'}}U_{pp'}\prod_{\substack{p
\in\ZZ_n \\0<p}}U_{0p}\prod_{p\in\ZZ_n}U_p,
\ee
where, for $p\,{<}\,p'$,
\[
U_{pp'}=e^{i(2\pi/n)|11\ra\la11|_{pp'}E}\text{ \;and\; } U_{p}=e^{i(\pi/n)|1\ra\la1|_pE}.
\]

The second and third products in Eq.~\eqref{eq:exponentiate_W^2E} can be implemented with $O(n)$ gates using arguments similar to those in \lem{H_0E}. In the rest of this proof we focus on the first product and show that it can be implemented with $O(n^2)$ gates.

Notice that $|11\ra\la11|_{pp'}E$ projects the subset of qubits $\ZZ_n{\setminus}\{p,p'\}$ onto the even weight subspace while projecting qubits $p$ and $p'$ to $|11\ra_{pp'}$.  Therefore, if $0\,{<}\,p\,{<}\,p'$, we can define $S_{pp'}:=\ZZ_n{\setminus}\{0,p,p'\}$ and
\[
C_{pp'}:=\prod_{j\in S_{pp'}}\cnotgate_{j,0},
\]
such that
\[
U_{pp'}=C^\dag_{pp'}e^{i(2\pi/n)|011\ra\la011|_{0pp'}}C_{pp'}.
\]
This implementation of $U_{pp'}$ takes $O(n)$ gates, which suggests $O(n^3)$ gates might be needed to implement all $n(n{-}1)/2$ factors in the first product in Eq.~\eqref{eq:exponentiate_W^2E}.  However, we can order the factors in such a way as to allow massive cancellation between consecutive $\cnotgate$ circuits $C_{pp'}$ and implement the first product with just $O(n^2)$ total gates.

Notice that
\[
C_{pp'}C_{qq'}^\dag = \prod_{j\in S_{pp'}\Delta S_{qq'}}\cnotgate_{j,0}
\]
is a circuit of at most four $\cnotgate$ gates. In fact, it is a circuit with just two $\cnotgate$ gates when $|\{p,p'\}\cap\{q,q'\}|=1$. Thus, the following two products can be implemented with $O(n)$ gates:
\begin{align*}
U_{x\uparrow}&=U_{x,x+1}U_{x,x+2}\cdots U_{x,n-1},\\
U_{y\downarrow}&=U_{y,n-1}U_{y,n-2}\cdots U_{y,y+1},
\end{align*}
where $x,y\in\ZZ_n{\setminus}\{n{-}1\}$. Hence, the first product in Eq.~\eqref{eq:exponentiate_W^2E} can be implemented with $O(n^2)$ gates because
\[
\prod_{\substack{p,p'
\in\ZZ_n \\0<p<p'}}U_{pp'}=U_{1\uparrow}U_{2\downarrow}U_{3\uparrow}\cdots U_{n-2,n-1}.
\]
\end{proof}
The above implementation of $e^{i(\pi/n) W^2 E}$ requires three-qubit gates of the form $e^{i\theta |011\ra\la 011|}$. The latter can be decomposed into a sequence of $O(1)$ two-qubit Clifford gates and single-qubit $Z$-rotations using the standard methods~\cite{nielsen2002quantum}. We summarize main result of this section in the following Theorem. 

\begin{theorem}
Eq.~(\ref{Ucircuit}) reports an ancilla-free quantum circuit of size $O(n^2)$ implementing $U_{\mathbf{hwb}}$.
\end{theorem}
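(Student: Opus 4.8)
The plan is to read Eq.~(\ref{Ucircuit}) literally as a five-layer factorization of $U_{\mathbf{hwb}}$ and to verify two things independently: that the factorization is \emph{correct} (equals $U_{\mathbf{hwb}}$) and that each of its layers admits an ancilla-free implementation with $O(n^2)$ two-qubit gates. Summing the layer costs then yields the theorem.

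Correctness has in fact already been assembled in the discussion preceding the theorem, so I would only collect the pieces. By \lem{2} the Hamiltonian $H=V^\dag H'V$ satisfies $\mathsf{C}=e^{iH}$; since $H_0$, $E$, and $W$ are mutually commuting diagonal operators and $\mathsf{F}$ commutes with $W$ by Eq.~(\ref{FW}), the operator $V=\mathsf{F}^\dag e^{iH_0E/2}$ commutes with $W$, whence $[H,W]=0$ and \lem{1} gives $U_{\mathbf{hwb}}=e^{iHW}$. The identity $e^{iV^\dag H'VW}=V^\dag e^{iH'W}V$, valid because $VW=WV$ lets the rightmost $W$ be moved past $V$ so that $V^\dag H'VW=V^\dag(H'W)V$ and the conjugation factors through each power, then produces the displayed product $e^{-iH_0E/2}\,\mathsf{F}\,e^{iH'W}\,\mathsf{F}^\dag\,e^{iH_0E/2}$. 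Nothing beyond these three facts is needed.

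The substance of the proof is the gate count, which I would carry out layer by layer. The outer layers $e^{\pm iH_0E/2}$ cost $O(n)$ gates each by \lem{H_0E}, and the two Fermionic Fourier layers $\mathsf{F}$ and $\mathsf{F}^\dag$ cost $O(n^2)$ each by \lem{FFT}. For the central layer I would use that $H'=H_0+(\pi/n)WE$ is a sum of commuting diagonal terms, so $e^{iH'W}=e^{iH_0W}\,e^{i(\pi/n)W^2E}$, reducing the count to the two factors. The factor $e^{iH_0W}$ expands, via $W=\sum_p|1\ra\la1|_p$ and $H_0=(2\pi/n)\sum_p p\,|1\ra\la1|_p$, into a product of $O(n^2)$ pairwise-commuting one- and two-qubit phase rotations $e^{i\theta|1\ra\la1|}$ and $e^{i\theta|11\ra\la11|}$, each a constant-size Clifford-plus-$Z$-rotation gadget.

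The main obstacle is the remaining factor $e^{i(\pi/n)W^2E}$, and this is precisely the content of Lemma~\ref{lemma:3}, which I would invoke rather than reprove. The difficulty is that expanding $W^2E$ into its $\binom{n}{2}+O(n)$ commuting projector terms and dressing each two-qubit projector by its $O(n)$-length CNOT parity circuit $C_{pp'}$ gives a naive bound of $O(n^3)$; Lemma~\ref{lemma:3} removes the extra factor of $n$ by ordering the factors so that consecutive parity circuits telescope, each product $C_{pp'}C_{qq'}^\dag$ collapsing to at most four CNOTs, bringing this factor to $O(n^2)$. Finally, since every gadget above consists only of Clifford gates and single-qubit $Z$-rotations and uses no ancillae, summing the five layers gives $O(n^2)$ total, which is the claim.
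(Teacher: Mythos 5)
Your proposal is correct and follows essentially the same route as the paper, which likewise establishes the theorem by assembling Lemmas~\ref{lem:1}--\ref{lemma:3} around the five-factor decomposition in Eq.~(\ref{Ucircuit}) and summing the per-layer costs $O(n)+O(n^2)+O(n^2)+O(n^2)+O(n)$. Your layer-by-layer accounting, including the split $e^{iH'W}=e^{iH_0W}e^{i(\pi/n)W^2E}$ and the appeal to the telescoping CNOT parity circuits of Lemma~\ref{lemma:3}, matches the paper's argument.
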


\section{Conclusion}
In this paper, we introduced two ancilla-free circuits implementing the Hidden Weighted Bit function, $O(n^{6.42})$-gate reversible circuit and $O(n^2)$-gate quantum circuit.  Our circuits improve best previously known exponential size reversible and quantum ancilla-free circuits into polynomial-size ones.  Our results demote $\mathbf{hwb}$ by removing it from the class of ``hard'' benchmarks \cite{saeedi2013synthesis}.  Our ancilla-free reversible implementation marks a new point in the study of ancilla vs gate count (space-time) tradeoff.  Noting a high exponent in the reversible circuit complexity and a more-than-qubic difference between complexities of our best quantum and reversible circuit implementations, we suggest that a further line of inquiry may target improving the reversible implementation.

\section*{Acknowledgements}
SB and TY are partially supported by the IBM Research Frontiers Institute.

\section*{Appendix A}
\label{app:A}

In this Appendix we construct a quantum circuit  implementing the fermionic Fourier transform $\mathsf{F}$ on $n$ qubits and illustrate it for $n{=}3$. The circuit is expressed using $O(n^2)$ single-qubit and two-qubit gates
\[
\mathsf{S}(\gamma) = e^{i \gamma |1\ra\la 1|} 
=\left[ \ba{cc} 1 & 0 \\ 0 & e^{i\gamma} \\ \ea \right]
\]
and
\[
\mathsf{R}(\alpha,\beta) = e^{\alpha e^{i\beta} |10\ra\la 01| - \alpha e^{-i\beta} |01\ra\la 10|}
=\left[
\ba{cccc}
1 & 0 & 0 & 0 \\
0 & \cos{(\alpha)} & -e^{-i\beta} \sin{(\alpha)} & 0 \\
0 & e^{i\beta} \sin{(\alpha)} & \cos{(\alpha)} & 0 \\
0 & 0 & 0 & 1 \\ \ea \right].
\] 
Here $\alpha,\beta,\gamma$ are real parameters. We use subscripts $p,q\in\ZZ_n$ to indicate qubits acted upon by each gate. In the fermionic language, $\mathsf{R}_{p,p+1}(\alpha,\beta)$ implements a Givens rotation 
in the two-dimensional subspace spanned by operators $a_p$ and $a_{p+1}$.
Namely, let $\mathsf{R}_{p,p+1}= \mathsf{R}_{p,p+1}(\alpha,\beta)$. Then
\begin{align}
\mathsf{R}_{p,p+1} a_p \mathsf{R}_{p,p+1}^\dag =&
\cos{(\alpha)} a_p - \sin{(\alpha)} e^{i\beta} a_{p+1}, \label{Givens1} \\
\mathsf{R}_{p,p+1} a_{p+1} \mathsf{R}_{p,p+1}^\dag =&
\sin{(\alpha)} e^{-i\beta} a_p + \cos{(\alpha)}  a_{p+1}, \label{Givens2}
\end{align}
We also need a fermionic SWAP gate~\cite{kivlichan2018quantum, verstraete2009quantum} defined as
\[
\mathsf{fSWAP} = \mathsf{CZ} \cdot \mathsf{SWAP}= \mathsf{R}(\pi/2,\pi/2) \mathsf{S}(-\pi/2)^{\otimes 2}.
\]
One can easily check that
\[
(\mathsf{fSWAP}_{p,p+1}) a_p (\mathsf{fSWAP}_{p,p+1})^\dag = a_{p+1}
\quad \mbox{and} \quad  (\mathsf{fSWAP}_{p,p+1}) a_{p+1} (\mathsf{fSWAP}_{p,p+1})^\dag = a_p.
\]
Define a unitary $n\,{\times}\, n$ matrix $f$ with matrix elements
\be
\label{Fnxn}
f_{p,q} = n^{-1/2} e^{ 2\pi i p q/n}, \text{ \;where }p,q\in \ZZ_n.
\ee
We will write $\mathsf{row}(f,p)$ for the $p$-th row of $f$. Below we define a function $\mathsf{ColumnReduce}(f,m,U)$ that takes as input a unitary $n\times n$ matrix $f$, an integer $m\in \ZZ_n$, and a quantum circuit $U$ acting on $n$ qubits.  The function returns a modified unitary matrix $f'$ and a modified quantum circuit $U'$.  A quantum circuit realizing the fermionic Fourier transform $\mathsf{F}$ on $n$ qubits
is generated by the following algorithm.

{\centering
\begin{minipage}{1.0\linewidth}
\begin{algorithm}[H]
\caption{$\quad \mathsf{Fermionic Fourier Transform}$}
	\begin{algorithmic}[1]
		\State{Let $f$ be the $n\times n$ unitary matrix defined in Eq.~(\ref{Fnxn})}
		\State{$U\gets I$\Comment{Empty quantum circuit}}
		\For{$m=n-1$ to $0$}
		\State{$(f,U)=\mathsf{ColumnReduce}(f,m,U)$}
		\EndFor
		\State{\Return $\mathsf{F}=U^{-1}$}		
		\end{algorithmic}
\end{algorithm}
\end{minipage}
}

{\centering
\begin{minipage}{1.0\linewidth}
\begin{algorithm}[H]
\caption{$\quad \mathsf{ColumnReduce}(f,m,U)$}
	\begin{algorithmic}[1]
		\For{$p=0$ to $m-1$}
		\If{$f_{p,m}\ne 0$ $\mathbf{or}$ $f_{p+1,m}\ne 0$}
		\If{$f_{p+1,m}=0$}
		\State{Swap $\mathsf{row}(f,p)$ and $\mathsf{row}(f,p+1)$}
		\State{$U\gets \mathsf{fSWAP}_{p,p+1} \cdot U$\Comment{Add $\mathsf{fSWAP}$ gate}}		
		\EndIf
		\Comment{Now $f_{p+1,m}\ne 0$}
		\State{Choose angles $\alpha,\beta$ such that $\tan{(\alpha)}e^{-i\beta} = - f_{p,m}/f_{p+1,m}$}
				\State{$v\gets \mathsf{row}(f,p)$}

		\State{$\mathsf{row}(f,p) \gets   \cos{(\alpha)}\mathsf{row}(f,p)  + \sin{(\alpha)} e^{-i\beta} \mathsf{row}(f,p+1)$\Comment{Now $f_{p,m}=0$}}
		\State{$\mathsf{row}(f,p+1) \gets \cos{(\alpha)}
		\mathsf{row}(f,p+1) - \sin{(\alpha)} e^{i\beta} v$}
		\State{$U\gets \mathsf{R}_{p,p+1}(\alpha, \beta)\cdot U$\Comment{Add $\mathsf{R}$ gate}}
		\EndIf
		\EndFor
		\Comment{Now $f_{m,m}$ is the only nonzero in the $m$-th column of $f$}
		\State{$\gamma \gets \mathrm{phase}(f_{m,m})$ \Comment{Now $f_{m,m}=e^{i\gamma}$}}
		\State{$f_{m,m}=1$}
		\State{$U\gets \mathsf{S}_m(\gamma) \cdot U$\Comment{Add $\mathsf{S}$ gate}}	
		\State{\Return{$(f,U)$}}
		\end{algorithmic}
\end{algorithm}
\end{minipage}
}

We claim that the quantum circuit $U$ and the unitary matrix $f$ obtained after each call to the function $\mathsf{ColumnReduce}$ have the property
\be
\label{algo1}
(U\mathsf{F}) a_p (U\mathsf{F})^\dag = \sum_{q\in \ZZ_n}^n f_{p,q} a_q
\mbox{ \;for all $p\in \ZZ_n$}.
\ee
Indeed, Eq.~(\ref{algo1}) is trivially true initially when $U{=}I$ and $f$ is defined by Eq.~(\ref{Fnxn}). The lines 4 and 7-10 of Algorithm~2 apply a sequence of Givens rotations to the matrix $f$ setting to zero all matrix elements $f_{p,m}$ with $0\,{\le}\,p\,{<}\,m$ and setting $f_{m,m}{=}1$.  The order in which matrix elements of $f$ are set to $0$ or $1$ is illustrated for $n{=}3$ below (asterisks indicate matrix elements of $f$).
\vspace{3mm}

\resizebox{15cm}{0.6cm}{
$\displaystyle{
\left[ \ba{ccc}
* & * & * \\
* & * & * \\
* & * & * \\
\ea\right]
\to
\left[ \ba{ccc}
* & * & 0 \\
* & * & * \\
* & * & * \\
\ea\right]
\to
\left[ \ba{ccc}
* & * & 0\\
* & * & 0 \\
* & * & * \\
\ea\right]
\to
\left[ \ba{ccc}
* & * & 0 \\
* & * & 0\\
* & * & 1 \\
\ea\right]
\to
\left[ \ba{ccc}
* & 0 & 0\\
* & * & 0\\
* & * & 1 \\
\ea\right]
\to
\left[ \ba{ccc}
* & 0 & 0\\
* & 1 & 0\\
* & * & 1 \\
\ea\right]
\to
\left[ \ba{ccc}
1 & 0 & 0\\
* & 1 & 0\\
* & * & 1 \\
\ea\right]
}$}

\vspace{3mm}
\noindent
Since $f$ remains unitary at each step, the final unit-diagonal low-triangular matrix  is the identity, i.e. $f{=}I$ after the last iteration of Algorithm~1.  Each time a Givens rotation is applied to some rows $p,p+1$ of the matrix $f$, the corresponding Givens rotations of fermionic operators $a_p$, $a_{p+1}$  are added to the quantum circuit $U$, see  Eqs.~(\ref{Givens1},\ref{Givens2}).  More precisely, the angles $\alpha,\beta$ at Line~7 are chosen such that the operator 
\[
\mathsf{R}_{p,p+1}(\alpha,\beta) (f_{p,m} a_p + f_{p+1,m} a_{p+1} ) \mathsf{R}_{p,p+1}(\alpha,\beta)^\dag
\]
is proportional to $a_{p+1}$, see Eqs.~(\ref{Givens1},\ref{Givens2}). Thus the property Eq.~(\ref{algo1}) is maintained at each step. After the last iteration of Algorithm~1 one has $f=I$ and Eq.~(\ref{algo1}) gives $(U\mathsf{F}) a_p (U\mathsf{F})^\dag =a_p$ for all $p$. Furthermore $U|0^n\ra=|0^n\ra$ since all gates added to $U$ map $|0^n\ra$ to itself. We conclude that $U=\mathsf{F}^{-1}$ after the last iteration of Algorithm~1. Thus the algorithm returns a quantum circuit realizing $\mathsf{F}$.  The inverse circuit $U^{-1}$ can be obtained from $U$ using the identities $\mathsf{R}(\alpha,\beta)^{-1}=\mathsf{R}(-\alpha,\beta)$ and $\mathsf{S}(\gamma)^{-1}=\mathsf{S}(-\gamma)$.  The direct inspection shows that the total numberof gates $\mathsf{fSWAP}$, $\mathsf{R}$ and $\mathsf{S}$ added to $U$ is $O(n^2)$.  We implemented Algorithms~1,2 in Matlab obtaining the following circuit in the case $n{=}3$.

\begin{figure}[h]
\centerline{\includegraphics[height=4.5cm]{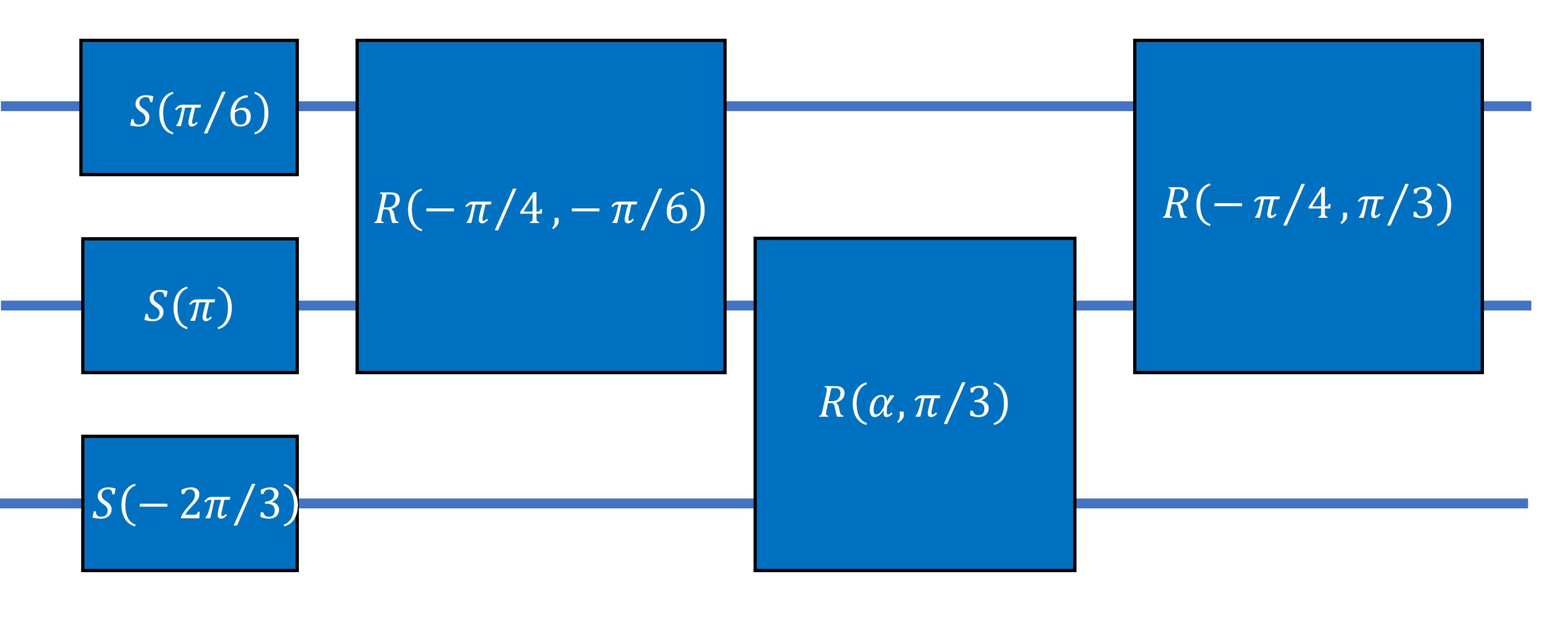}}
\caption{Quantum circuit realizing the $3$-qubit fermionic Fourier transform $\mathsf{F}$.
The circuit was generated using Algorithm~1.
Here $\alpha = - (1/2) \arccos{(1/3)} \approx -0.9553$.}
\end{figure}

\section*{Appendix B}
\label{app:B}

Here we prove \lem{2}. First note that
\[
\mathsf{C} =\mathsf{SWAP}_{0,1} \mathsf{SWAP}_{1,2}\cdots \mathsf{SWAP}_{n-2,n-1}.
\]
Define a fermionic SWAP operator~\cite{kivlichan2018quantum,verstraete2009quantum}
\be
\label{fSWAP1}
\mathsf{fSWAP}= \mathsf{CZ}\cdot \mathsf{SWAP}
=\left[\ba{cccc}
1 & 0 & 0 & 0 \\
0 & 0 & 1 & 0 \\
0 & 1 & 0 & 0 \\
0 & 0 & 0 & -1 \\
\ea\right]
\ee
and a fermionic cyclic shift
\be
\label{fC1}
\mathsf{fC} =  \mathsf{fSWAP}_{0,1} \mathsf{fSWAP}_{1,2}\cdots \mathsf{fSWAP}_{n-2,n-1}.
\ee

A simple algebra shows that
\be
\label{fC2}
\mathsf{C}|x\ra = (-1)^{x_{n-1}(x_0+x_1+\ldots+x_{n-2})} \mathsf{fC}|x\ra.
\ee
Let $k=x_0+x_1+\ldots+x_{n-1}$ be the  Hamming weight of $x$.
Then 
\be
\label{phase1}
(-1)^{x_{n-1}(x_0+x_1+\ldots+x_{n-2})} = (-1)^{x_{n-1}(k-x_{n-1})}=(-1)^{x_{n-1}k +x_{n-1}} = (-1)^{x_{n-1}(k+1)}.
\ee
Thus $\mathsf{C}\,{=}\,\mathsf{fC}$ on the odd-weight subspace
and $\mathsf{C}\,{=}\,\mathsf{fC}Z_{n-1}$ on the even-weight subspace, i.e.
\be
\label{Cparts}
\mathsf{C}= E \mathsf{fC}Z_{n-1} + (I-E) \mathsf{fC}.
\ee
We claim that 
\be
\label{identity1}
\mathsf{fC} = \mathsf{F} e^{iH_0}  \mathsf{F}^\dag.
\ee
Indeed, let 
\[
G\equiv  \mathsf{F} e^{iH_0}  \mathsf{F}^\dag.
\]
First note that $\mathsf{fC}|0^n\ra = G|0^n\ra = |0^n\ra$.  Since any state can be obtained from $|0^n\ra$ by applying the creation operators $a_p^\dag$, it suffices to check that 
\[
\mathsf{fC}^\dag a_p \mathsf{fC} =   G^\dag a_p G
\]
for all $p$. A simple algebra shows  that 
\[
(\mathsf{fSWAP}_{p,p+1}) a_p (\mathsf{fSWAP}_{p,p+1})^\dag = a_{p+1}
\quad \mbox{and} \quad  (\mathsf{fSWAP}_{p,p+1}) a_{p+1} (\mathsf{fSWAP}_{p,p+1})^\dag = a_p.
\]
Combining this and Eq.~(\ref{fC1}) one gets
\[
\mathsf{fC}^\dag a_p \mathsf{fC} =a_{p-1},
\]
where the indices of fermionic operators are evaluated modulo $n$.
Using the identities
\[
e^{-iH_0} a_q e^{iH_0} = e^{2\pi i q/n} a_q,
\]
\[
\mathsf{F} a_p \mathsf{F}^\dag = \frac1{\sqrt{n}} \sum_{q\in \ZZ_n}^n e^{2\pi i pq/n} a_q,
\quad \mbox{and} \quad
\mathsf{F}^\dag a_q \mathsf{F} = \frac1{\sqrt{n}} \sum_{r\in \ZZ_n}^n e^{-2\pi i qr/n} a_r,
\]
 one gets
\[
G^\dag a_p G =  n^{-1/2} \sum_{q\in \ZZ_n} e^{2\pi i (1-p)q/n} \mathsf{F} a_q\mathsf{F}^\dag
= n^{-1} \sum_{q,r\in \ZZ_n} e^{2\pi i (1-p+r)q/n} a_r = a_{p-1}.
\]
Thus $G^\dag a_p G=\mathsf{fC}^\dag a_p \mathsf{fC} =a_{p-1}$, proving Eq.~(\ref{identity1}).

Next we claim that 
\be
\label{identity2}
\mathsf{fC}Z_{n-1} = e^{-iH_0/2} \mathsf{fC}e^{iH_0/2} e^{i(\pi/n)W}.
\ee
Indeed, let 
\[
L:= \mathsf{fC}Z_{n-1}  \quad \mbox{and} \quad R:= e^{-iH_0/2} \mathsf{fC}e^{iH_0/2} e^{i(\pi/n)W}.
\]
Since $L|0^n\ra=R|0^n\ra=|0^n\ra$,
it suffices to check that $L^\dag a_p L = R^\dag a_p R$ for all
$p\in \ZZ_n$.
A simple algebra gives
\[
e^{-i(\pi/n)W}a_p e^{i(\pi/n)W} = e^{i(\pi/n)} a_p,
\]
\[
Z_{n-1} a_p Z_{n-1} = \left\{ \ba{rcl}
a_p &\mbox{if} & 0\le p\le n-2\\
-a_p & \mbox{if} & p = n-1\\
\ea
\right.,
\]
\[
e^{iH_0/2}a_p e^{-iH_0/2} = e^{-i\pi p/n} a_p,
\quad \mbox{and} \quad
e^{-iH_0/2}a_p e^{iH_0/2} = e^{i\pi p/n} a_p
\]
for all $p\in \ZZ_n$. Recall that
$\mathsf{fC}^\dag a_p \mathsf{fC} =a_{p-1}$.
Using the above identities one gets
\[
L^\dag a_p L=
\left\{ \ba{rcl}
a_p &\mbox{if} & 1\le p\le n-1\\
-a_p & \mbox{if} & p = 0\\
\ea
\right.
\]
and
\[
R^\dag  a_p R
=e^{-i\pi p/n}e^{i\pi p'/n} e^{i(\pi/n)}  a_{p-1},
\]
where $p'\equiv p-1 {\pmod n}$. Note that
\[
e^{-i\pi p/n}e^{i\pi p'/n} = \left\{ \ba{rcl}
e^{-i\pi/n} &\mbox{if} & 1\le p\le n-1\\
- e^{-i\pi/n}& \mbox{if} & p = 0\\
\ea
\right..
\]
Thus $L^\dag a_p L = R^\dag a_p R$, that is, $L{=}R$, proving Eq.~(\ref{identity2}).

Combining Eqs.~(\ref{Cparts},\ref{identity1},\ref{identity2}) one infers that 
the restrictions of $\mathsf{C}$ onto the odd-weight and even-weight subspaces coincide with the
operators
\[
\mathsf{C}_{odd}=\mathsf{F} e^{iH_0}  \mathsf{F}^\dag
\]
and
\[
\mathsf{C}_{even}= e^{-iH_0/2}(\mathsf{F} e^{iH_0}  \mathsf{F}^\dag) e^{iH_0/2} e^{i(\pi/n)W}
\]
respectively. Thus
\[
\mathsf{C}=e^{-iH_0E/2}(\mathsf{F} e^{iH_0}  \mathsf{F}^\dag) e^{iH_0E/2} e^{i(\pi/n)WE}
\]
on the full Hilbert space. Recall that the fermionic Fourier transform  $\mathsf{F}$ preserves the Hamming weight. Thus $\mathsf{F}^\dag$ commutes with $e^{i(\pi/n)WE}$. Commuting the term $e^{i(\pi/n)WE}$ to the left gives
\[
\mathsf{C}=e^{-iH_0E/2}\mathsf{F} \left(e^{iH_0}e^{i(\pi/n)WE} \right) \mathsf{F}^\dag e^{iH_0E/2}
=V^\dag e^{iH'} V,
\]
where $V=\mathsf{F}^\dag e^{iH_0E/2}$ and $H'=H_0 + (\pi/n)WE$. Thus $\mathsf{C}=e^{iV^\dag H'V}$, proving \lem{2}.

\end{document}